\documentclass[journal,onecolumn]{IEEEtran}

\usepackage{cite}
\usepackage{amsthm}
\usepackage{amsmath,amssymb,amsfonts}
\usepackage{graphicx}
\usepackage{algorithm,algorithmic}
\usepackage{hyperref}
\hypersetup{hidelinks=true}
\usepackage{textcomp}

\usepackage[T1]{fontenc}
\usepackage{bm}
\usepackage{color}
\usepackage{subfigure}
\usepackage{ulem}
\usepackage{array,makecell,booktabs}
\usepackage{threeparttable}
\usepackage{multirow}
\usepackage{listings}
\newcommand{\oomit}[1]{}
\newcommand{\argmax}{\mathop{\mathrm{argmax}}}

\newtheorem{assumption}{Assumption}
\newtheorem{definition}{Definition}
\newtheorem{proposition}{Proposition}
\newtheorem{corollary}{Corollary}
\newtheorem{lemma}{Lemma}
\newtheorem{remark}{Remark}
\newtheorem{theorem}{Theorem}
\newtheorem{example}{Example}
\newtheorem{problem}{Problem}

\def\BibTeX{{\rm B\kern-.05em{\sc i\kern-.025em b}\kern-.08em
    T\kern-.1667em\lower.7ex\hbox{E}\kern-.125emX}}
\markboth{\hskip25pc }
{Controlled Reach-avoid Set Computation for Discrete-time Polynomial Systems via Convex Optimization}

\begin{document}

\title{Controlled Reach-avoid Set Computation for Discrete-time Polynomial Systems via Convex Optimization}

\author{Taoran Wu, Yiling Xue, Dejin Ren, Arvind Easwaran,\\ Martin Fränzle, and Bai Xue
 \thanks{Taoran Wu, Yiling Xue, Dejin Ren, and Bai Xue are with Key laboratory of System Software (Chinese Academy of Sciences) and State Key Laboratory of Computer Sciences, Institute of Software, Chinese Academy of Sciences, Beijing, China 100190, and University of Chinese Academy of Sciences, Beijing, China 100049. Email: (\{wutr,xueyl,rendj,xuebai\}@ios.ac.cn).}
\thanks{Yiling Xue is also with School of Advanced Interdisciplinary Sciences, University of Chinese Academy of Sciences, Beijing, China 100190.}
\thanks{Arvind Easwaran is with College of Computing and Data Science, Nanyang Technological University, Singapore (e-mail: arvinde@ntu.edu.sg).}
\thanks{Martin Fränzle is with Carl von Ossietzky Universität, Oldenburg, Germany (e-mail: martin.fraenzle@uni-oldenburg.de).}
 }
 
\maketitle

\begin{abstract}
This paper addresses the computation of controlled reach-avoid sets (CRASs) for discrete-time polynomial systems subject to control inputs. A CRAS is a set encompassing initial states from which there exist control inputs driving the system into a target set while avoiding unsafe sets. However, efficiently computing CRASs remains an open problem, especially for discrete-time systems. In this paper, we propose a novel framework for computing CRASs which takes advantage of a probabilistic perspective. This framework transforms the fundamentally nonlinear problem of computing CRASs into a computationally tractable convex optimization problem. By regarding control inputs as disturbances obeying certain probability distributions, a CRAS can be equivalently treated as a $0$-reach-avoid set in the probabilistic sense, which consists of initial states from which the probability of eventually entering the target set while remaining within the safe set is greater than zero. Thus, we can employ the convex optimization method of computing $0$-reach-avoid sets to estimate CRASs. Furthermore, inspired by the $\epsilon$-greedy strategy widely used in reinforcement learning, we propose an approach that iteratively updates the aforementioned probability distributions imposed on control inputs to compute larger CRASs. We demonstrate the effectiveness of the proposed method on extensive examples.
\end{abstract}

\begin{IEEEkeywords}
Controlled Reach-avoid Sets, Discrete-time Polynomial Systems, Convex Optimization
\end{IEEEkeywords}

\section{Introduction}
\label{sec:intro}
In today's interconnected world, discrete-time control systems are ubiquitous, spanning a wide range of applications from consumer electronics to medical devices. These systems are designed to seamlessly interface with digital computers and control physical processes at discrete time intervals \cite{alur2015principles}. Among these systems, polynomial discrete-time systems stand out as a versatile and widely used model\oomit{of dynamical system}. A polynomial discrete-time system is a type of discrete-time system that is described by a set of polynomial equations. Being able to model dynamical systems with linear as well as non-linear characteristics, polynomial discrete-time systems can accurately capture the complex behavior of real-world systems, making them an essential tool for engineers and researchers. 

A central problem in the analysis of dynamical systems that is equally important to engineers and researchers is the determination of reach-avoid properties of dynamical systems featuring controllable inputs. The question of interest here is whether it is possible, using adequate sequences of control inputs, to guarantee that a system can reach both a desired target state set and avoid an unsafe state set. For example, in motion planning, a robot may need to navigate through a cluttered environment to reach a goal location while avoiding obstacles. Similarly, in surveillance systems, it is crucial to detect and track targets while avoiding false alarms and ensuring the safety of humans. Controlled reach-avoid sets (CRASs) provide a rigorous and unified framework for achieving such objectives. They are a set of initial states that, when started from, guarantee that there are control inputs making the system reach the desired goal set while simultaneously avoiding the unsafe set. In other words, CRASs define the initial conditions under which a system can be safely, i.e.\ under avoidance of hazardous scenarios, steered towards a desirable target state. The present work focusses on the computation of CRASs for discrete-time polynomial systems.

Computing CRASs is notoriously challenging, but several methods have been suggested that can be adapted to polynomial discrete-time systems. These methods include reachability analysis \cite{gao2020computing,schafer2023scalable}, dynamic programming \cite{margellos2011hamilton,zhao2022outer}, and Lyapunov-like methods \cite{tan2006nonlinear,wang2018permissive,ren2024iterative}. Among these approaches, Lyapunov-like methods stand out as particularly promising techniques and have been gaining significant attention. These methods were originally developed for analyzing the stability of equilibrium points in dynamical systems, transforming the stability problem into the identification of a Lyapunov function. This transformation effectively circumvents the need for analytical solutions to the system. In recent decades, advancements in polynomial optimization, especially in sum-of-squares (SOS) polynomial optimization \cite{parrilo2000structured}, have further enhanced the automation and thus the potential of Lyapunov-like methods, making them powerful tools for computing various sets, including controlled domains of attraction and controlled invariant sets. When the system of interest is polynomial and lacks control inputs, the problem of finding Lyapunov-like functions can be addressed by solving a convex optimization problem constructed using the SOS decomposition technique for multivariate polynomials. Later, Lyapunov-like functions, such as control barrier functions \cite{ames2019control} and control guidance-barrier functions \cite{xue2024reach}, have been proposed for computing controlled invariant sets and CRASs. The synthesis of Lyapunov-like functions for systems with control inputs is inherently nonlinear, posing a significant challenge. However, this issue can be circumvented by converting the problem into a family of convex optimization problems, especially when the system is continuous-time and described by polynomial ordinary differential equations that are affine in the control inputs. In such cases, the nonlinear problem can be reformulated as a bilinear SOS program, which can be efficiently solved iteratively through convex optimization techniques \cite{tan2006nonlinear,clark2021verification,clark2022semi,wang2023safety,ren2024iterative}. 
Despite these advances, a significant gap remains in the existing literature and established methodologies regarding the synthesis of Lyapunov-like functions for discrete-time control systems using SOS polynomial optimization. The primary reason for this gap is the inclusion of nonlinear terms in the composition of Lyapunov-like functions and control inputs in existing Lyapunov-like conditions, which are more complex than bilinear terms. These nonlinear terms make the aforementioned SOS polynomial optimization approaches unsuitable for application to Lyapunov-like functions for discrete-time polynomial systems subject to control inputs\footnote{To this end, note that for discrete-time polynomial systems, affine control inputs and arbitrary polynomial inputs share the same expressiveness such that in contrast to the continuous-time setting, tractability cannot be gained from restricting the input type. The reason is that in discrete time dynamical systems, additional state variables with discrete-time polynomial dynamics can simulate polynomial control inputs by polynomially mapping affine inputs.}. In this work, we will bridge this gap and address this issue.

In this paper, we present a convex optimization framework to compute CRASs for discrete-time polynomial systems, which falls under the category of SOS optimization. Our method eliminates the nonlinear terms in the composition of discrete-time Lyapunov-like functions and control inputs by imposing a probability distribution on the control variables and applying an expectation operator to the composition of these functions and inputs. Specifically, we reformulate CRASs as $0$-reach-avoid sets, as demonstrated in \cite{xue2021reach}, in a probabilistic sense by assigning a probability distribution to control variables. This reformulation enables the application of the SOS polynomial optimization techniques previously developed for computing $0$-reach-avoid sets to compute CRASs. Moreover, we iteratively refine the initial probability distribution assigned to control variables using the $\epsilon$-greedy strategy, a common approach in reinforcement learning, to enhance the refinement of Lyapunov-like functions. This iterative refinement procedure, in which a convex optimization is solved in each iteration, allows for the continuous exploration of CRASs. Finally, we demonstrate the effectiveness of the proposed approach through extensive experiments and compare it with state-of-the-art methods.
 

The main contributions of this work are summarized below.
\begin{enumerate}
    \item  We approach the computation of CRASs for discrete-time polynomial systems from a probabilistic standpoint by assigning a defined probability distribution to the control variables. We show that CRASs are equivalent to $0$-reach-avoid sets in a probabilistic sense. Thus, the problem of computing CRASs can be equivalently reformulated as the computation of $0$-reach-avoid sets, thus facilitating the computation of CRASs by mitigating the nonlinear terms arising from the composition of Lyapunov-like functions and control inputs.
    \item  To achieve a less conservative CRAS, we introduce a novel iterative algorithm, which is inspired by the $\epsilon$-greedy approach commonly employed in reinforcement learning, and updates the probability distribution over control variables in each iteration to refine Lyapunov-like functions. 
    \item We demonstrate the effectiveness of the proposed methods through extensive experiments and provide a comprehensive comparison with existing methods of computing Lyapunov-like functions for discrete-time polynomial systems subject to control inputs.
\end{enumerate}

\subsection*{Related Work}
\label{sec:related}
To the best of our knowledge, there are no methods specially tailored for computing CRASs for discrete-time polynomial systems, although there are works on computing a set covering the maximal CRAS \cite{han2018controller,zhao2022outer} or reach-avoid sets for systems without control input \cite{xue2020inner}. Some other existing methods could be adapted for computing CRASs, such as 
dynamic programming methods \cite{margellos2011hamilton,zhao2022outer} and Lyapunov-like methods \cite{tan2006nonlinear,wang2018permissive,ren2024iterative}. Consequently, the related work reviewed below will also cover these methods, although they are not comprehensive.

Dynamic programming methods commonly necessitate the discretization of state and control spaces, a process that can significantly increase computation time exponentially with the system size. In contrast, Lyapunov-like methods, which involve the search for a Lyapunov-like function by solving an optimization problem, do not require such discretization, thereby circumventing the inherent exponential complexity inherent in state and control space discretization. Consequently, they have garnered increasing attention over the past decades, giving rise to a substantial body of computational methods such as symbolic computation methods \cite{jirstrand1997nonlinear}, counterexample-guided learning methods \cite{chang2019neural,ravanbakhsh2019learning,zhao2021learning,yang2023hybrid,vzikelic2023learning,chatterjee2023learner,abate2020formal}, and SOS programming methods \cite{tan2006nonlinear,wang2018permissive,ren2024iterative}. 

Symbolic computation methods for calculating Lyapunov-like functions are resource-intensive and exhibit a doubly exponential worst-case complexity \cite{davenport1988real}, which can severely impact the efficiency and practicality of these methods in various applications. Despite their ability to yield exact and theoretically sound results, the high computation time and memory requirements can make them impractical for many real-world scenarios. 

In recent decades, the field of polynomial optimization, particularly the Sum of Squares (SOS) approach, has become a pivotal tool for the computation of polynomial Lyapunov functions in control systems. For polynomial continuous-time control-affine systems, the inclusion of bilinear constraints makes SOS polynomial optimization inherently non-convex. Despite this challenge, the non-convex optimization can be effectively tackled through an iterative process that alternately updates Lyapunov-like functions and control functions, solving a convex optimization problem at each iteration. This strategy for decomposing bilinear SOS polynomial optimization into convex sub-problems is widely used in numerous existing works, including \cite{tan2006nonlinear,toulkani2024reducing,wang2018permissive,wang2023safety,ren2024iterative}. Prior to this study, the computation of Lyapunov-like functions for discrete-time control systems using SOS polynomial optimization has been significantly less explored compared to its continuous-time counterpart. The methods designed for continuous-time dynamics are not directly applicable to discrete-time systems due to the presence of nonlinear terms in the composition of discrete-time Lyapunov-like functions with control inputs, which are inherent to existing Lyapunov-like conditions. The aim of this research is to address this gap by proposing efficient SOS polynomial optimization techniques tailored for the discrete-time setting.

Recently, counterexample-guided learning methods for the computation of Lyapunov-like functions have garnered increasing attention within the formal methods and control theory communities. This is attributed to the development of powerful machine learning algorithms and their ability to deal with general nonlinear systems beyond polynomial ones. The counterexample-guided learning method for computing Lyapunov-like functions is an iterative process that includes two main components in each iteration: a learner and a verifier. The learner's role is to generate candidate Lyapunov-like functions, while the verifier assesses the validity of these candidates using SMT \cite{chang2019neural,zhao2021learning,edwards2023general}, SOS polynomial optimization \cite{yang2023hybrid}, mixed-integer
linear programming \cite{wu2023neural}, and other techniques such as the boundary propagation technique \cite{wang2024simultaneous}. If the verifier identifies a candidate function as an invalid one, it produces counterexamples --- specific scenarios where the candidate function fails to perform as intended. These counterexamples offer valuable feedback, enabling the learner to refine its function generation process. However, the learning and verification procedures in each iteration can be computationally intensive, and this method does not guarantee the successful identification of valid Lyapunov-like functions. In contrast to the aforementioned methods, the approach proposed in this work specifically targets discrete-time polynomial systems and focuses on identifying Lyapunov-like functions through convex optimization, which enhances computational efficiency. This advantage is illustrated through several numerical examples involving discrete-time polynomial systems, showcasing the effectiveness of the search for Lyapunov-like functions.

\subsection*{Paper Outline and Notations}
This paper is structured as follows: in Section \ref{sec:preliminaries}, we introduce the problem of computing CRASs of interest and discuss the challenges of existing Lyapunov-like conditions for characterizing CRASs. In Section \ref{sec:ra}, we introduce a method for computing the initial CRASs. Then our algorithm to expand CRASs is proposed in Section \ref{sec:update}. Section \ref{sec:exp} provides experimental evaluations, and conclusions are drawn in Section \ref{sec:conclusion}.

Throughout this paper, we refer to several basic notions. For instance, $\mathbb{N}$ and $\mathbb{R}^n$ denote the set of non-negative integers and n-dimensional real vectors, respectively; vectors are denoted by boldface lowercase letters; for an $m$-dimensional vector $\bm{z}$, $\bm{z}(i)$ denotes its $i$-th element, where $1\leq i\leq m$; for a vector $\bm{z}$ and a scalar $\delta$, $\bm{z}+\delta$ means $\delta$ is added to each component of the vector $\bm{z}$; for a state $\bm{x}\in\mathbb{R}^n$ and a constant $\epsilon > 0$, $\mathcal{B}(\bm{x}, \epsilon)$ represents the $\epsilon-$neighborhood of $\bm{x}$, i.e., $\mathcal{B}(\bm{x}, \epsilon) = \{\bm{y}\in\mathbb{R}^n \mid \|\bm{y}-\bm{x}\|<\epsilon\}$; $\mathbb{R}[\cdot]$ denotes the ring of polynomials with real-valued coefficients in variables given by the argument; $\sum[\bm{x}]$ denotes the set of sum of squares polynomials, i.e.,
$\sum[\bm{x}]=\{p(\bm{x})\mid p(\bm{x})=\sum_{i=1}^k q_i^2(\bm{x}), q_i(\bm{x})\in \mathbb{R}[\bm{x}]\}$.

\section{Preliminaries}
\label{sec:preliminaries}

In this section, we introduce discrete-time polynomial systems subject to control inputs, CRASs, and Lyapunov-like conditions for characterizing  CRASs. 

\subsection{Problem Formulation}
Consider a discrete-time polynomial system subject to control inputs, which is modeled by a 
polynomial state update equation of the form
\begin{equation}
    \begin{split}
        & \bm{x}(t+1) = \bm{f}(\bm{x}(t), \bm{u}(t)), \forall t\in \mathbb{N},\\
        & \bm{x}(0) = \bm{x}_0, \\
    \end{split}
    \label{eq:systems}
\end{equation}
where $\bm{x}(\cdot):\mathbb{N}\rightarrow\mathbb{R}^n$ is the dynamic state, $\bm{x}_0 \in \mathbb{R}^n$ is the initial state, $\bm{u}(\cdot):\mathbb{N}\rightarrow \mathcal{U}$ is the control input with $\mathcal{U} \subseteq \mathbb{R}^m$ being a compact set, and $\bm{f}(\cdot):\mathbb{R}^n\times \mathcal{U} \rightarrow \mathbb{R}^n$ with $\bm{f}(\cdot) \in \mathbb{R}[\cdot]$ is the system dynamics. In this paper, we consider the case where the input set $\mathcal{U}$ is a compact set of interval form, i.e., $\mathcal{U} = \{\bm{u}\mid\underline{\bm{u}} \leq \bm{u} \leq \overline{\bm{u}} \}$ where $\underline{\bm{u}}$ and $\overline{\bm{u}}$ are $m$-dimensional vectors. A control policy for system \eqref{eq:systems} is an $\omega$-sequence of control inputs, represented by a function $\bm{u}(\cdot): \mathbb{N} \rightarrow \mathcal{U}$.



Given an initial state $\bm{x}_0 \in \mathcal{D}$ and a control policy $\pi = \{\bm{u}(t)\}_{t\in\mathbb{N}}$, we can obtain a trajectory of system \eqref{eq:systems}, which is a sequence $\{\phi^\pi_{\bm{x}_0}(t)\}_{t\in\mathbb{N}}$ satisfying $\phi^\pi_{\bm{x}_0}(0) = \bm{x}_0$ and 
\begin{equation}
\label{eq:traj}
    \phi^\pi_{\bm{x}_0}(t+1)=\bm{f}(\phi^\pi_{\bm{x}_0}(t), \bm{u}(t)), \forall t \in \mathbb{N}.
\end{equation}


Let $h(\bm{x})$ and $g(\bm{x})$ be polynomial functions over $\bm{x}\in \mathbb{R}^n$ such that $h$ and $g$ are radially unbounded, i.e., $\lim_{\|\bm{x}\|\rightarrow +\infty} h(\bm{x})=+\infty\text{~and~}\lim_{\|\bm{x}\|\rightarrow +\infty} g(\bm{x})=+\infty$. For a polynomial, this condition is satisfied if the homogeneous polynomial consisting of its highest-degree monomials is positive definite \cite{ahmadi2017sum}.
Given these functions, we define a bounded open safe set $\mathcal{X} \subseteq \mathbb{R}^n$ and a bounded open target set $\mathcal{T} \subseteq \mathcal{X}$, where
\begin{equation*}
    \mathcal{X}=\{\bm{x}\in \mathbb{R}^n \mid h(\bm{x})< 0\},~\mathcal{T}=\{\bm{x}\in \mathbb{R}^n \mid g(\bm{x})< 0\}.
\end{equation*}
Then, a CRAS is defined as follows.

\begin{definition}[Controlled Reach-avoid Set]
    A CRAS $\mathcal{R}$ for system \eqref{eq:systems} is a set $\mathcal{R} \subseteq \mathcal{X}$ of initial states such that starting from any $\bm{x}_0 \in \mathcal{R}$, there exists at least one control policy $\pi$ under which the resulting trajectory hits the target set $\mathcal{T}$ in a finite time $k \in \mathbb{N}$ while staying inside the safe set $\mathcal{X}$ for all times $t\leq k$, i.e., 
    \begin{equation*}
      \exists \pi. \exists k\in \mathbb{N}.~[ \bm{\phi}_{\bm{x}_0}^{\pi}(k)\in \mathcal{T} \wedge \bigwedge_{t=0}^k \bm{\phi}_{\bm{x}_0}^{\pi}(t)\in \mathcal{X}]
    \end{equation*}
\end{definition}

As mentioned in the introduction, the computation of a CRAS is important since it specifies a region that enables the system to exhibit complex behaviors while ensuring safety and reaching targets. In this paper, our aim is to compute a large CRAS, i.e.\ a tight under-approximation of the maximal set of states satisfying the above condition. 


\subsection{Lyapunov-like Conditions for Characterizing CRASs}
In this subsection, we revisit Lyapunov-like conditions for characterizing CRASs and formulate challenges in applying them to actual computation of CRASs. 

 A sufficient and necessary condition, based on the relaxation of a Bellman equation, was proposed in \cite{zhao2022inner} for characterizing robust reach-avoid sets. These sets are defined as collections of initial states that guarantee the reach of target sets while ensuring safety, irrespective of disturbances. Although the system dynamics considered in this paper do not include disturbances—instead incorporating control inputs—the framework from \cite{zhao2022inner} can still be adapted to characterize CRASs.
A modified version of the condition in \cite{zhao2022inner} can be employed to derive a Lyapunov-like condition for characterizing CRASs in this paper.


\begin{proposition}
\label{prop:ra_sup}
If there exists a Lyapunov-like function \footnote{A Lyapunov-like function is a scalar function that generalizes the classical Lyapunov function concept, which is used to analyze the stability and behavior of dynamical systems.} 
$v(\bm{x}): \widehat{\mathcal{X}}\rightarrow \mathbb{R}$, which is bounded and continuous, satisfying 
\begin{equation}
    \label{eq:ra_sup}
    \begin{cases}
      \max_{\bm{u}\in \mathcal{U}}v(\bm{f}(\bm{x},\bm{u}))-\lambda v(\bm{x})\geq 0, & \forall \bm{x}\in \mathcal{X}\setminus \mathcal{T},\\ 
       v(\bm{x})\leq 0, & \forall \bm{x}\in \widehat{\mathcal{X}}\setminus \mathcal{X},
    \end{cases}
\end{equation}
where $\lambda>1$ is a user specified value and $\widehat{\mathcal{X}}$ is a set including states which system \eqref{eq:systems} starting from the safe set $\mathcal{X}$ visits within the first step, i.e., 
\begin{equation}
    \label{eq:x_hat}
    \widehat{\mathcal{X}}\supseteq \{\bm{x}\in \mathbb{R}^n\mid \bm{x}=\bm{f}(\bm{x}_0,\bm{u}),\bm{x}_0\in \mathcal{X}, \bm{u}\in \mathcal{U}\}\cup \mathcal{X},
\end{equation}
then $\{\bm{x}\in \mathcal{X}\mid v(\bm{x})>0\} \neq \emptyset$ is a CRAS.
\end{proposition}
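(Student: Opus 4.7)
The plan is to construct an explicit control policy that is greedy with respect to $v$, and then show that the resulting trajectory cannot stay outside the target set forever because otherwise $v$ would grow without bound along it. Fix any $\bm{x}_0 \in \mathcal{X}$ with $v(\bm{x}_0) > 0$. At each step $t$, so long as $\bm{x}(t) \in \mathcal{X}\setminus\mathcal{T}$, I would pick $\bm{u}(t) \in \argmax_{\bm{u}\in\mathcal{U}} v(\bm{f}(\bm{x}(t),\bm{u}))$; when $\bm{x}(t) \in \mathcal{T}$ any admissible input suffices. Because $\mathcal{U}$ is compact and $v\circ \bm{f}(\bm{x}(t),\cdot)$ is continuous (continuity of $v$ on $\widehat{\mathcal{X}}$ together with continuity of the polynomial $\bm{f}$), the maximum is attained and the policy is well-defined.

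Next, I would prove by induction on $t$ that either the trajectory has already hit $\mathcal{T}$ by step $t$, or $\bm{x}(t)\in\mathcal{X}$ with $v(\bm{x}(t)) \geq \lambda^{t} v(\bm{x}_0) > 0$. The base case $t=0$ is immediate. For the inductive step, assume $\bm{x}(t) \in \mathcal{X}\setminus\mathcal{T}$ with $v(\bm{x}(t)) \geq \lambda^t v(\bm{x}_0) > 0$. The first inequality of \eqref{eq:ra_sup} combined with the greedy choice of $\bm{u}(t)$ yields $v(\bm{x}(t+1)) \geq \lambda v(\bm{x}(t)) \geq \lambda^{t+1} v(\bm{x}_0) > 0$. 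To see that $\bm{x}(t+1)\in\mathcal{X}$, observe that by \eqref{eq:x_hat} the successor state $\bm{x}(t+1) = \bm{f}(\bm{x}(t),\bm{u}(t))$ lies in $\widehat{\mathcal{X}}$; if it were in $\widehat{\mathcal{X}}\setminus\mathcal{X}$, the second line of \eqref{eq:ra_sup} would force $v(\bm{x}(t+1)) \leq 0$, contradicting the strict positivity just derived. This closes the induction.

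Finally, I would argue by contradiction that the trajectory must enter $\mathcal{T}$ in finite time. If it did not, the induction would apply at every step, giving $v(\bm{x}(t)) \geq \lambda^t v(\bm{x}_0) \to +\infty$ as $t \to \infty$, since $\lambda > 1$ and $v(\bm{x}_0) > 0$; this contradicts the assumed boundedness of $v$ on $\widehat{\mathcal{X}}$. Hence there exists some finite $k$ with $\bm{x}(k) \in \mathcal{T}$, while the induction guarantees $\bm{x}(t) \in \mathcal{X}$ for all $t \leq k$, establishing that $\bm{x}_0$ lies in a CRAS.

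I expect the only real subtlety to be the no-escape step, namely showing that the one-step greedy update cannot leap out of $\mathcal{X}$: a priori $\bm{x}(t+1)$ could lie anywhere in $\mathbb{R}^n$, and it is precisely the barrier-like inequality on $\widehat{\mathcal{X}}\setminus\mathcal{X}$, together with the containment \eqref{eq:x_hat}, that rules this out. Once this is in place, the growth-versus-boundedness contradiction is routine, and the existence of a maximizing input at each step is guaranteed by compactness plus continuity.
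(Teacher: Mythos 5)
Your proof is correct. The paper itself gives no proof of Proposition~\ref{prop:ra_sup}; it only remarks that the condition is a modified version of one in the cited reference on robust reach-avoid sets. Your argument --- construct the greedy policy $\bm{u}(t)\in\argmax_{\bm{u}\in\mathcal{U}}v(\bm{f}(\bm{x}(t),\bm{u}))$, show inductively that the trajectory stays in $\{v>0\}\cap\mathcal{X}$ with $v$ growing at least geometrically (using \eqref{eq:x_hat} and the barrier inequality on $\widehat{\mathcal{X}}\setminus\mathcal{X}$ to rule out escape), and then contradict the boundedness of $v$ --- is exactly the standard argument underlying such Lyapunov-like conditions, and every step checks out: attainment of the maximum follows from compactness of $\mathcal{U}$ and continuity of $v\circ\bm{f}(\bm{x}(t),\cdot)$, well-definedness of $v$ at the successor state follows from \eqref{eq:x_hat}, and $\mathcal{T}\subseteq\mathcal{X}$ guarantees the hitting state is also safe. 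The only cosmetic point is that the non-emptiness of $\{\bm{x}\in\mathcal{X}\mid v(\bm{x})>0\}$ asserted in the statement is not derivable from \eqref{eq:ra_sup} (e.g.\ $v\equiv 0$ satisfies it); what you prove, correctly, is that this set is a CRAS whenever it is non-empty, which is the substantive content.
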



Proposition \ref{prop:ra_sup} transforms the task of computing CRASs into the problem of identifying a Lyapunov-like function $v(\bm{x})$ that satisfies constraint \eqref{eq:ra_sup}. However, the search for a Lyapunov-like function $v(\bm{x}): \widehat{\mathcal{X}}\rightarrow \mathbb{R}$ that fulfills \eqref{eq:ra_sup} is inherently challenging. In addition to the nonlinearity, the presence of the maximum operator in the term $\max_{\bm{u}\in \mathcal{U}}v(\bm{f}(\bm{x},\bm{u}))$ further complicates the resolution of constraint \eqref{eq:ra_sup}. In order to eliminate the maximum operator, a traditional approach is to introduce a feedback control function $\bm{u}(\bm{x}): \mathbb{R}^n \rightarrow \mathcal{U}$, leading to a Lyapunov-like condition formulated in Corollary \ref{coro:ra}.
\begin{corollary}
\label{coro:ra}
If there exists a Lyapunov-like function $v(\bm{x}): \widehat{\mathcal{X}}\rightarrow \mathbb{R}$, which is bounded and continuous, and a control function $\bm{u}(\bm{x}): \mathbb{R}^n \rightarrow \mathcal{U}$, satisfying 
\begin{equation}
    \label{eq:ra_sup1}
    \begin{cases}
      v(\bm{f}(\bm{x},\bm{u}(\bm{x})))-\lambda v(\bm{x})\geq 0, & \forall \bm{x}\in \mathcal{X}\setminus \mathcal{T},\\ 
       v(\bm{x})\leq 0, & \forall \bm{x}\in \widehat{\mathcal{X}}\setminus \mathcal{X},
    \end{cases}
\end{equation}
where $\lambda>1$ is a user specified value and $\widehat{\mathcal{X}}$ is a set satisfying \eqref{eq:x_hat},  
then $\{\bm{x}\in \mathcal{X}\mid v(\bm{x})>0\} \neq \emptyset$ is a CRAS.
\end{corollary}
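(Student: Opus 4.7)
The plan is to derive Corollary \ref{coro:ra} as an immediate consequence of Proposition \ref{prop:ra_sup}, by observing that specifying an explicit feedback control $\bm{u}(\bm{x})$ is a stronger hypothesis than the pointwise existence of a maximizing input implicit in condition \eqref{eq:ra_sup}. The overall strategy is therefore to check that any pair $(v(\bm{x}),\bm{u}(\bm{x}))$ fulfilling \eqref{eq:ra_sup1} makes $v(\bm{x})$ satisfy the hypotheses of Proposition \ref{prop:ra_sup}, after which the conclusion transfers verbatim.

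First, I would dispose of the second constraint. The second line of \eqref{eq:ra_sup1} is literally the second line of \eqref{eq:ra_sup}, with the same domain $\widehat{\mathcal{X}}\setminus\mathcal{X}$ and the same inequality $v(\bm{x})\leq 0$, so it requires no further argument.

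Next, for the reach-avoid decrease condition on $\mathcal{X}\setminus\mathcal{T}$, fix any $\bm{x}\in \mathcal{X}\setminus \mathcal{T}$. Since $\bm{u}(\bm{x})\in \mathcal{U}$ by definition of the feedback control function, the value $v(\bm{f}(\bm{x},\bm{u}(\bm{x})))$ is one of the candidates over which the maximum in \eqref{eq:ra_sup} is taken, giving
\begin{equation*}
\max_{\bm{u}\in \mathcal{U}} v(\bm{f}(\bm{x},\bm{u})) \;\geq\; v(\bm{f}(\bm{x},\bm{u}(\bm{x}))) \;\geq\; \lambda\, v(\bm{x}),
\end{equation*}
where the last inequality is precisely the first line of \eqref{eq:ra_sup1}. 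Hence $\max_{\bm{u}\in \mathcal{U}} v(\bm{f}(\bm{x},\bm{u})) - \lambda v(\bm{x}) \geq 0$, which is the first line of \eqref{eq:ra_sup}. Together with the boundedness, continuity, and domain conditions already assumed on $v$ and $\widehat{\mathcal{X}}$, this shows that $v(\bm{x})$ satisfies the full hypothesis of Proposition \ref{prop:ra_sup}, so the conclusion that $\{\bm{x}\in \mathcal{X}\mid v(\bm{x})>0\}$ is a CRAS follows immediately.

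There is no real obstacle in this argument, as Corollary \ref{coro:ra} is essentially a specialization of Proposition \ref{prop:ra_sup}. The only point worth stating explicitly is the requirement $\bm{u}(\bm{x})\in \mathcal{U}$ for all $\bm{x}$; this is built into the definition $\bm{u}(\bm{x}):\mathbb{R}^n\rightarrow \mathcal{U}$, and is what legitimizes the substitution of $\bm{u}(\bm{x})$ as a feasible admissible choice inside the maximum. No continuity or measurability of $\bm{u}(\bm{x})$ needs to be invoked, because the passage from \eqref{eq:ra_sup1} to \eqref{eq:ra_sup} is purely pointwise.
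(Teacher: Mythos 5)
Your argument is correct and is exactly the intended derivation: the paper states Corollary \ref{coro:ra} as an immediate specialization of Proposition \ref{prop:ra_sup} without spelling out a proof, and the pointwise bound $\max_{\bm{u}\in\mathcal{U}} v(\bm{f}(\bm{x},\bm{u}))\geq v(\bm{f}(\bm{x},\bm{u}(\bm{x})))$ justified by $\bm{u}(\bm{x})\in\mathcal{U}$ is precisely the step being relied upon. Nothing is missing.
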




One notable observation is that, with a polynomial feedback control function $\bm{u}(\bm{x}):\mathbb{R}^n\rightarrow \mathcal{U}$ at hand, the quest for a polynomial Lyapunov-like function $v(\bm{x}):\widehat{\mathcal{X}}\rightarrow \mathbb{R}$ that satisfies \eqref{eq:ra_sup1} can be recast as a semi-definite program, which falls within the convex optimization framework, leveraging the sum of squares decomposition for multivariate polynomials. Nonetheless, the selection of a polynomial feedback control function that guarantees the existence of a Lyapunov-like function fulfilling \eqref{eq:ra_sup1} proves to be a formidable challenge, with no efficient techniques currently available to identify such functions. On the other hand, even if a polynomial control function that guarantees the existence of a Lyapunov-like function fulfilling \eqref{eq:ra_sup1} is chosen and a corresponding Lyapunov-like function is found, how to efficiently improve the control function and the Lyapunov-like function to obtain less conservative CRASs remains an open problem.

\begin{remark}
In the existing literature, when the system is represented as a polynomial ordinary differential equation that is affine with respect to control inputs, a practical iterative approach can be employed to estimate CRASs. This method alternately updates polynomial Lyapunov-like functions and polynomial feedback control functions using convex optimization, with an initial guess on the feedback control function $u(\bm{x}): \mathbb{R}^n\rightarrow \mathcal{U}$ or the Lyapunov-like function. However, this method is not directly applicable to the discrete-time scenario of interest in this paper. The discrete-time case is more complex and challenging. The composition of the feedback control function and Lyapunov-like function $v(\bm{f}(\bm{x},\bm{u}))$ makes the update of the feedback control function $\bm{u}(\bm{x}): \mathbb{R}^n\rightarrow \mathcal{U}$ non-convex (we may encounter nonlinear terms such as $\bm{u}^2(\bm{x})$, $\bm{u}^3(\bm{x})$, $\ldots$), when the Lyapunov function $v(\bm{x})$ is available.
\end{remark}


\section{Initial Computations of CRASs}
\label{sec:ra}
In this section, we introduce a convex optimization approach for estimating CRASs from a probabilistic standpoint. Specifically, we reformulate the estimation of CRASs as the computation of reach-avoid sets in a probabilistic context by treating the control inputs as disturbance noises following specific probability distributions. This approach results in the elimination of the variable $\bm{u}$ in the composition of control inputs and Lyapunov-like functions, achieved through the use of the expectation operator.  Consequently, convex constraints only involving the search for Lyapunov-like functions are constructed for characterizing CRASs.

We introduce randomness into control inputs by assigning it a specific probability distribution. The vectors $\bm{u}(0)$, $\bm{u}(1)$, $\ldots$, are drawn independently and identically distributed (i.i.d.) from the probability space $(\Omega,\mathcal{F},P)$, satisfying that for any measurable set $\mathcal{A}\subseteq \mathcal{U}$, $$
\text{Prob}(\bm{u}(l)\in \mathcal{A})=P(\mathcal{A}), \forall l\in \mathbb{N},
$$ with $P(\mathcal{U})=1$. The expectation under this distribution is denoted by $E[\cdot]$. Consequently, a control policy $\pi$ for system \eqref{eq:systems} can be viewed as a realization of a stochastic process $\{\bm{u}(i):\Omega\rightarrow \mathcal{U},i\in \mathbb{N}\}$ defined on the canonical sample space $\mathcal{U}^{\infty}$, with a probability measure $P^{\infty}$ induced by $P$. The expectation associated with $P^{\infty}$ is denoted by $E^{\infty}[\cdot]$. By endowing the control input with randomness in this manner, we are now in a position to formally define $p$-reach-void sets within this probabilistic framework, where $p\in [0,1)$.



\begin{definition}[$p$-reach-avoid sets, \cite{xue2021reach}]
\label{def:prs}
A $p$-reach-avoid set $\mathcal{R}_p$ is a set of initial states $\bm{x}_0$ that each gives rise to a set of trajectories which, with probability being larger than $p$, eventually enter the target set $\mathcal{T}$ while remaining inside the safe set $\mathcal{X}$ until the target is hit, i.e., 
\begin{equation*}
P^{\infty}\Big(\exists k\in \mathbb{N}. \bm{\phi}^{\pi}_{\bm{x}_0}(k)\in \mathcal{T}\bigwedge\forall l\in [0,k]\cap \mathbb{N}. \bm{\phi}^{\pi}_{\bm{x}_0}(l)\in \mathcal{X}\Big)> p.
\end{equation*} 
\end{definition}

As noted earlier in Remark 1 of \cite{xue2021reach}, there exists a non-empty set of control policies $\pi$ that can safely guide the system \eqref{eq:systems} from $\mathcal{R}_0$ to the target $\mathcal{T}$. In this paper, we further demonstrate the equivalence between a CRAS $\mathcal{R}$ and a $0$-reach-avoid set $\mathcal{R}_0$, under Assumption \ref{assum}. This relationship is elaborated upon in Theorem \ref{theo:equal}.

\begin{assumption}
\label{assum}
For any measurable set  $\mathcal{B}\subseteq \mathcal{U}$ with non-empty interior, the probability measure 
$P(\mathcal{B})>0$.
\end{assumption}

\begin{theorem}
    \label{theo:equal}
   Under Assumption \ref{assum}, a CRAS $\mathcal{R}$ is a $0$-reach-avoid set $\mathcal{R}_0$, and vice versa.
\end{theorem}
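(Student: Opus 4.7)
The plan is to prove the two inclusions at the level of the defining properties, showing that a state $\bm{x}_0$ satisfies the CRAS condition of Definition 1 if and only if it satisfies the $0$-reach-avoid condition of Definition \ref{def:prs}.

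The direction ``$0$-reach-avoid set $\subseteq$ CRAS'' is the easier one and does not require Assumption \ref{assum}. If $\bm{x}_0 \in \mathcal{R}_0$, the event in Definition \ref{def:prs} has strictly positive $P^{\infty}$-measure, so in particular it is non-empty as a subset of $\mathcal{U}^{\infty}$. There is therefore at least one realization $\pi^{*} \in \mathcal{U}^{\infty}$ of the control process under which the trajectory starting from $\bm{x}_0$ reaches $\mathcal{T}$ in finite time while staying in $\mathcal{X}$. Regarding this $\pi^{*}$ as a deterministic control policy immediately yields $\bm{x}_0 \in \mathcal{R}$.

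For the reverse direction, take $\bm{x}_0 \in \mathcal{R}$ and fix a witnessing deterministic policy $\pi^{*}=\{\bm{u}^{*}(t)\}_{t\in\mathbb{N}}$ with finite horizon $k$, producing a trajectory $\bm{x}_j:=\bm{\phi}^{\pi^{*}}_{\bm{x}_0}(j)$ with $\bm{x}_j\in \mathcal{X}$ for $0\le j\le k-1$ and $\bm{x}_k\in \mathcal{T}$. Using continuity of $\bm{f}$ together with the openness of $\mathcal{X}$ and $\mathcal{T}$, I would build, by backward induction on $j$, radii $\epsilon_0,\ldots,\epsilon_{k-1}>0$ such that any choice of $\bm{u}(j) \in \mathcal{B}(\bm{u}^{*}(j),\epsilon_j)\cap \mathcal{U}$ for $j=0,\ldots,k-1$ produces a perturbed trajectory that remains in $\mathcal{X}$ up to time $k-1$ and lands in $\mathcal{T}$ at time $k$. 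Since $\mathcal{U}$ is a nondegenerate box and each $\bm{u}^{*}(j)\in \mathcal{U}$, every set $\mathcal{B}(\bm{u}^{*}(j),\epsilon_j)\cap \mathcal{U}$ has non-empty interior, so Assumption \ref{assum} makes each $P$-measure strictly positive. Since the $\bm{u}(t)$ are i.i.d.\ under $P^{\infty}$, the cylinder of sequences whose first $k$ coordinates lie in the product of these neighborhoods is contained in the reach-avoid event of Definition \ref{def:prs}, and independence gives
\begin{equation*}
P^{\infty}(\text{reach-avoid event}) \;\geq\; \prod_{j=0}^{k-1} P\bigl(\mathcal{B}(\bm{u}^{*}(j),\epsilon_j)\cap \mathcal{U}\bigr) \;>\; 0,
\end{equation*}
so $\bm{x}_0 \in \mathcal{R}_0$.

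The main technical obstacle will be making the backward induction on the $\epsilon_j$ rigorous. Because $\bm{x}_{j+1}=\bm{f}(\bm{x}_j,\bm{u}^{*}(j))$ depends jointly on state and control, one first pins down an open neighborhood of $\bm{x}_k$ inside $\mathcal{T}$, and then at each step $j$ uses continuity of $\bm{f}$ at $(\bm{x}_j,\bm{u}^{*}(j))$ to convert a tolerance on $\bm{x}_{j+1}$ into a joint tolerance on $(\bm{x}_j,\bm{u}^{*}(j))$ that also fits inside the open set $\mathcal{X}$, shrinking the state tolerance and the input radius $\epsilon_j$ simultaneously. Everything else is a standard continuous-dependence plus product-measure argument, with the role of Assumption \ref{assum} being exactly to rule out pathological distributions $P$ that assign zero mass to open subsets of $\mathcal{U}$.
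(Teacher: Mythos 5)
Your proposal is correct and follows essentially the same route as the paper: $\mathcal{R}_0 \subseteq \mathcal{R}$ via non-emptiness of a positive-measure event, and $\mathcal{R} \subseteq \mathcal{R}_0$ via a perturbation argument showing that a whole open cylinder of control sequences around the witnessing policy still realizes the reach-avoid behaviour, after which Assumption 1 and independence give positive probability. The only difference is technical: the paper propagates the perturbation forward using Lipschitz constants of $\bm{f}$ and a single uniform radius $\delta$, whereas you construct the radii $\epsilon_j$ by backward induction from continuity alone; both work, and yours has the minor advantage of not invoking global Lipschitz constants for a polynomial map.
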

\begin{proof}
    We establish the equivalence between the CRAS $\mathcal{R}$ and the $0$-reach-avoid set $\mathcal{R}_0$ by demonstrating their mutual inclusion. Specifically, we prove that $\mathcal{R} \subseteq \mathcal{R}_0$ and $\mathcal{R}_0 \subseteq \mathcal{R}$.
    \begin{enumerate}
        \item $\bm{\mathcal{R} \subseteq \mathcal{R}_0}$. For any state $\bm{x}_0 \in \mathcal{R}$, according to the definition of $\mathcal{R}$, there exists a hitting time $k_0$ and a control policy $\pi_0=\{\bm{u}(0),\dots, \bm{u}(k_0)\}$ such that $\bm{\phi}^{\pi_0}_{\bm{x}_0}(k_0)\in\mathcal{T}$ and for all $t \in \{0,1, \ldots, k_0\}$, $\bm{\phi}^{\pi_0}_{\bm{x}_0}(t)\in\mathcal{X}$. As both $\mathcal{T}$ and $\mathcal{X}$ are open sets, there exists an $\epsilon>0$ such that $\mathcal{B}(\bm{\phi}^{\pi_0}_{\bm{x}_0}(k_0),\epsilon)\subseteq\mathcal{T}$ and for all $t \in \{1, \ldots, k_0-1\}$, $\mathcal{B}(\bm{\phi}^{\pi_0}_{\bm{x}_0}(t),\epsilon)\subseteq\mathcal{X}$. Assume that the Lipschitz constants of $\bm{f}(\bm{x}, \bm{u})$ with respect to $\bm{x}$ and $\bm{u}$ are $\mathcal{L}_{x}$ and $\mathcal{L}_{u}$ respectively. Let $\mathcal{L}=\max\{\mathcal{L}_x,\mathcal{L}_u\}$ and $\delta<\frac{\epsilon}{\mathcal{L}^{k_0}+\mathcal{L}^{k_0-1}
        +\dots+\mathcal{L}}$ 
        .For any control policy $\pi_0^\prime = \{\bm{u}^\prime(0),\dots, \bm{u}^\prime(k_0)\}$, where $\bm{u}^\prime(t)\in\mathcal{B}(\bm{u}(t),\delta)\cap\mathcal{U}$ for all $t\in\{0,1,\dots,k_0-1\}$, we have 
        \begin{equation*}
            \Vert\bm{\phi}^{\pi_0^\prime}_{\bm{x}_0}(1)-\bm{\phi}^{\pi_0}_{\bm{x}_0}(1)\Vert=\Vert \bm{f}(\bm{x}_0,\bm{u}^{\prime}(0))-\bm{f}(\bm{x}_0,\bm{u}(0))\Vert\leq\mathcal{L}_u\Vert\bm{u}^\prime(0)-\bm{u}(0)\Vert<\epsilon,
        \end{equation*}
        which means that $\bm{\phi}^{\pi_0^\prime}_{\bm{x}_0}(1)\in\mathcal{B}(\bm{\phi}^{\pi_0}_{\bm{x}_0}(1),\epsilon)\subseteq\mathcal{X}.$
        
        Similarly, we have
        \begin{equation*}
            \begin{split}
              &\Vert\bm{\phi}^{\pi_0^\prime}_{\bm{x}_0}(2)-\bm{\phi}^{\pi_0}_{\bm{x}_0}(2)\Vert\\
              =&\Vert \bm{f}(\bm{\phi}^{\pi_0^\prime}_{\bm{x}_0}(1),\bm{u}^{\prime}(1))-\bm{f}(\bm{\phi}^{\pi_0}_{\bm{x}_0}(1),\bm{u}(1))\Vert\\
              =&\Vert \bm{f}(\bm{f}(\bm{x}_0,\bm{u}^{\prime}(0)),\bm{u}^{\prime}(1))-\bm{f}(\bm{f}(\bm{x}_0,\bm{u}(0)),\bm{u}(1))\Vert\\
              =&\Vert \bm{f}(\bm{f}(\bm{x}_0,\bm{u}^{\prime}(0)),\bm{u}^{\prime}(1))-\bm{f}(\bm{f}(\bm{x}_0,\bm{u}(0)),\bm{u}^{\prime}(1))\\
              &+\bm{f}(\bm{f}(\bm{x}_0,\bm{u}(0)),\bm{u}^{\prime}(1))-\bm{f}(\bm{f}(\bm{x}_0,\bm{u}(0)),\bm{u}(1))\Vert\\
              \leq&\mathcal{L}_x\Vert \bm{f}(\bm{x}_0,\bm{u}^{\prime}(0))-\bm{f}(\bm{x}_0,\bm{u}(0))\Vert+\mathcal{L}_u\Vert\bm{u}^{\prime}(1)-\bm{u}(1)\Vert\\
              \leq&\mathcal{L}_x\mathcal{L}_u\Vert\bm{u}^{\prime}(0)-\bm{u}(0)\Vert+\mathcal{L}_u\Vert\bm{u}^{\prime}(1)-\bm{u}(1)\Vert\\
              <&\epsilon.
            \end{split}
        \end{equation*}
        Inductively, we have
        \[\forall t\in\{1,2, \ldots, k_0\},\Vert\bm{\phi}^{\pi^\prime}_{\bm{x}_0}(t)-\bm{\phi}^{\pi}_{\bm{x}_0}(t)\Vert\leq\sum_{l=1}^{t}\mathcal{L}_x^{l-1}\mathcal{L}_u\Vert\bm{u}^\prime(t-l)-\bm{u}(t-l)\Vert.\]
        Thus $\forall t\in\{1,2, \ldots, k_0\},\Vert\bm{\phi}^{\pi^\prime}_{\bm{x}_0}(t)-\bm{\phi}^{\pi}_{\bm{x}_0}(t)\Vert<\epsilon$, which means $\bm{\phi}^{\pi^\prime}_{\bm{x}_0}(k_0) \in \mathcal{T}$ and $\forall t\in\{1,2, \ldots, k_0-1\}, \bm{\phi}^{\pi_0^\prime}_{\bm{x}_0}(t)\in\mathcal{B}(\bm{\phi}^{\pi_0}_{\bm{x}_0}(t),\epsilon)\subseteq\mathcal{X}$. Then we have
        \begin{equation*}
            \begin{split}
                &P^{\infty}\Big(\exists k\in \mathbb{N}. \bm{\phi}^{\pi}_{\bm{x}_0}(k)\in \mathcal{T}\bigwedge\forall l\in [0,k]\cap \mathbb{N}. \bm{\phi}^{\pi}_{\bm{x}_0}(l)\in \mathcal{X}\Big)\\
                \geq  &P^{\infty}\Big(\forall t\in[0,k_0-1]\cap\mathbb{N}.  \bm{u}^{\prime}(t)\in\mathcal{B}(\bm{u}(t),\delta)\cap\mathcal{U}\Big)\\
                >&0.
            \end{split}
        \end{equation*}

        Consequently, $\bm{x}_0 \in \mathcal{R}_0$, and further, $\mathcal{R}\subseteq \mathcal{R}_0$.

        \item $\bm{\mathcal{R}_0 \subseteq \mathcal{R}}$. According to the definition of $0$-reach-avoid set, the probability of that starting from $\bm{x}_0$ in $\mathcal{R}_0$, there exists at least a control policy that enables the system \eqref{eq:systems} to enter the target set within a finite time and remain in the safe set before is greater than 0. Therefore, the set of control inputs that satisfy the property mentioned before cannot be an empty set, which means $\bm{x}_0\in\mathcal{R}$. Thus $\mathcal{R}_0$ is a subset of $\mathcal{R}$, i.e., $\mathcal{R}_0 \subseteq \mathcal{R}$.
    \end{enumerate}
    The proof is completed.
\end{proof}

According to Theorem \ref{theo:equal}, the task of estimating a CRAS $\mathcal{R}$ is equivalent to estimating a $0$-reach-avoid set $\mathcal{R}_0$ without adding any additional conservatism. Therefore, we can utilize any Lyapunov-like conditions for computing a $0$-reach-avoid set $\mathcal{R}_0$ in the calculation of a CRAS $\mathcal{R}$. A Lyapunov-like condition, which was proposed in \cite{xue2024finite}, is outlined in Lemma \ref{theo:ra_exp}.


\begin{lemma} [Theorem 5 in \cite{xue2024finite} in which $\beta=0$ and $N\rightarrow \infty$]
\label{theo:ra_exp}
Given $\lambda \in (1,\infty)$, if there exists a bounded function $v(\bm{x}): \widehat{\mathcal{X}}\rightarrow \mathbb{R}$ satisfying 
\begin{equation}
    \label{eq:ra_exp0}
    \begin{cases}
       E[v(\bm{f}(\bm{x},\bm{u}))]-\lambda v(\bm{x})\geq 0, & \forall \bm{x}\in \mathcal{X}\setminus \mathcal{T},\\ 
       v(\bm{x})\leq 0, & \forall \bm{x}\in \widehat{\mathcal{X}}\setminus \mathcal{X},\\
     v(\bm{x})\leq 1, & \forall \bm{x}\in \mathcal{T},\\
    \end{cases}
\end{equation}
then $\{\bm{x}\in \mathcal{X}\mid v(\bm{x})>0\} \neq \emptyset$ is a 0-reach-avoid set, 
where the set $\widehat{\mathcal{X}}$ satisfies condition \eqref{eq:x_hat}.
\end{lemma}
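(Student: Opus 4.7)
The plan is to prove the lemma by a stopping-time submartingale argument combined with a proof by contradiction. Fix $\bm{x}_0 \in \mathcal{X}$ with $v(\bm{x}_0) > 0$, and let $\tau = \inf\{k \in \mathbb{N} \mid \bm{\phi}^{\pi}_{\bm{x}_0}(k) \notin \mathcal{X}\setminus\mathcal{T}\}$ be the first time the trajectory either hits $\mathcal{T}$ or leaves $\mathcal{X}$; by the condition $\widehat{\mathcal{X}}\supseteq \bm{f}(\mathcal{X},\mathcal{U})\cup\mathcal{X}$, the value $v(\bm{\phi}^{\pi}_{\bm{x}_0}(\tau))$ is well-defined whenever $\tau<\infty$. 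Introduce the rescaled stopped process
\[
M_k \;=\; \lambda^{-(k\wedge\tau)} \, v\bigl(\bm{\phi}^{\pi}_{\bm{x}_0}(k\wedge\tau)\bigr).
\]

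I would first show that $\{M_k\}$ is a submartingale with respect to the natural filtration $\{\mathcal{F}_k\}$ of $\{\bm{u}(k)\}$. On $\{k<\tau\}$ the state $\bm{\phi}^{\pi}_{\bm{x}_0}(k)$ lies in $\mathcal{X}\setminus\mathcal{T}$, so the first inequality of \eqref{eq:ra_exp0}, applied conditionally on $\mathcal{F}_k$ and divided by $\lambda^{k+1}$, yields $E[M_{k+1}\mid\mathcal{F}_k]\geq M_k$; on $\{k\geq\tau\}$ the process is constant. Taking expectations gives $E[M_k]\geq v(\bm{x}_0)>0$ for every $k\in\mathbb{N}$.

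Next I would pass to the limit. Because $v$ is bounded (say $|v|\leq K$) and $\lambda>1$, the sequence $\{M_k\}$ is uniformly bounded by $K$ and converges pointwise to a limit $M_\infty$, with $|M_k|\leq K\lambda^{-k}\to 0$ on the event $\{\tau=\infty\}$. Bounded convergence then gives $E[M_\infty]\geq v(\bm{x}_0)>0$. To finish, I decompose the sample space into the reach-avoid event $A=\{\tau<\infty,\,\bm{\phi}^{\pi}(\tau)\in\mathcal{T}\}$, on which $M_\infty\leq\lambda^{-\tau}\leq 1$ by the third line of \eqref{eq:ra_exp0}; the escape event $B=\{\tau<\infty,\,\bm{\phi}^{\pi}(\tau)\in\widehat{\mathcal{X}}\setminus\mathcal{X}\}$, on which $M_\infty\leq 0$ by the second line; and the trapped event $C=\{\tau=\infty\}$, on which $M_\infty = 0$. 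If $\bm{x}_0\notin\mathcal{R}_0$, then $P^{\infty}(A)=0$, so $E[M_\infty]\leq P^{\infty}(A)\cdot 1 + 0 + 0 = 0$, contradicting $E[M_\infty]>0$. Hence $\bm{x}_0\in\mathcal{R}_0$, which is exactly the claim.

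The main obstacle I expect is measure-theoretic rather than analytic: verifying that $\tau$ is a stopping time on the canonical space $\mathcal{U}^{\infty}$, that the conditional-expectation step used to establish the submartingale property is rigorous under the i.i.d.\ input model, and that bounded convergence transfers the submartingale lower bound to the almost-sure limit. The hypothesis $\lambda>1$ plays a quiet but crucial role, furnishing the geometric decay that collapses $M_\infty$ to $0$ on $\{\tau=\infty\}$ and thereby preventing the trapped scenario from absorbing the positive mass on the right-hand side of the submartingale inequality.
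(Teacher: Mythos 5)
Your argument is correct, but note that the paper itself supplies no proof of this lemma: it is imported verbatim as a specialization ($\beta=0$, $N\rightarrow\infty$) of Theorem 5 in the cited reference \cite{xue2024finite}, so there is nothing in the present manuscript to compare against line by line. Your stopped, $\lambda$-discounted submartingale $M_k=\lambda^{-(k\wedge\tau)}v(\bm{\phi}^{\pi}_{\bm{x}_0}(k\wedge\tau))$ is the natural certificate-to-probability translation, and every step checks out: the first line of \eqref{eq:ra_exp0} gives the drift inequality on $\{k<\tau\}$ because $\bm{u}(k)$ is independent of $\mathcal{F}_k$ and $\bm{\phi}^{\pi}_{\bm{x}_0}(k)$ is $\mathcal{F}_k$-measurable; condition \eqref{eq:x_hat} guarantees $v$ is evaluated only on $\widehat{\mathcal{X}}$; the three-way split into $A$, $B$, $C$ is exhaustive (with $B$ correctly landing in $\widehat{\mathcal{X}}\setminus\mathcal{X}$ since the pre-exit state lies in $\mathcal{X}$); and the bound $v(\bm{x}_0)\leq E[M_\infty]\leq P^{\infty}(A)$ in fact yields the quantitative statement $P^{\infty}(A)\geq v(\bm{x}_0)$, which is stronger than mere positivity and explains why the normalization $v\leq 1$ on $\mathcal{T}$ appears in \eqref{eq:ra_exp0} (and why Proposition \ref{pro_sim} can discard it by rescaling). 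Two small points deserve a sentence in a polished write-up: $v$ must be Borel measurable (boundedness alone does not make $E[v(\bm{f}(\bm{x},\bm{u}))]$ or $v\circ\bm{\phi}$ well defined, though this is automatic in the polynomial setting the paper targets), and the degenerate case $\bm{x}_0\in\mathcal{T}$, where $\tau=0$ and membership in $\mathcal{R}_0$ is immediate, should be dispatched separately. What your self-contained derivation buys over the paper's citation is transparency: it isolates exactly which hypotheses ($\lambda>1$ killing the trapped event, the sign condition on $\widehat{\mathcal{X}}\setminus\mathcal{X}$ killing the escape event) drive the conclusion, without invoking the finite-horizon, $\beta$-parametrized machinery of the external theorem.
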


As per Lemma \ref{theo:ra_exp}, a CRAS can be obtained by solving the constraint \eqref{eq:ra_exp0}. In comparing the Lyapunov-like conditions \eqref{eq:ra_sup} and \eqref{eq:ra_exp0}, it is observed that \eqref{eq:ra_exp0} utilizes the expectation operator in place of the maximum operator over the control variable. This modification represents a qualitative improvement in the computation of CRASs, transforming the non-convex constraint \eqref{eq:ra_sup} on the Lyapunov-like function into a convex constraint  \eqref{eq:ra_exp0} (i.e., if $v_1(\bm{x})$ and $v_2(\bm{x})$ satisfy \eqref{eq:ra_exp0}, $\lambda_1 v_1(\bm{x})+\lambda_2 v_2(\bm{x})$ will also satisfy \eqref{eq:ra_exp0} for any $\lambda_1+\lambda_2=1$).
This conclusion also extends to the comparison between the Lyapunov-like conditions \eqref{eq:ra_sup1} and \eqref{eq:ra_exp0}. While the constraint \eqref{eq:ra_sup1} becomes convex on the Lyapunov-like function with a known feedback control function, the selection of such a control function that ensures the existence of a Lyapunov-like function satisfying \eqref{eq:ra_sup1} proves to be a significant challenge. The constraint \eqref{eq:ra_exp0}, however, does not require the selection of such control functions. 

The constraint in \eqref{eq:ra_exp0} can be further simplified to compute CRASs. However, if there exists $v(\bm{x})$ satisfying \eqref{eq:ra_exp0},  it will satisfy the simplified constraint. The simplified constraint is shown in Proposition \ref{pro_sim}. 

\begin{proposition}
\label{pro_sim}
Given $\lambda \in (1,\infty)$, if there exists a bounded function $v(\bm{x}): \widehat{\mathcal{X}}\rightarrow \mathbb{R}$ satisfying  \oomit{\textcolor{red} {satisfying $\sup_{\bm{x}\in\widehat{\mathcal{X}}}v(\bm{x}) > 0$} and that}
\begin{equation}
    \label{eq:ra_exp}
    \begin{cases}
       E[v(\bm{f}(\bm{x},\bm{u}))]-\lambda v(\bm{x})\geq 0, & \forall \bm{x}\in \mathcal{X}\setminus \mathcal{T},\\ 
       v(\bm{x})\leq 0, & \forall \bm{x}\in \widehat{\mathcal{X}}\setminus \mathcal{X},\\
    \end{cases}
\end{equation}
then $\{\bm{x}\in \mathcal{X}\mid v(\bm{x})>0\} \neq \emptyset$ is a 0-reach-avoid set, where $\widehat{\mathcal{X}}$ is a set in \eqref{eq:x_hat}.    
\end{proposition}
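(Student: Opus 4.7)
The plan is to derive Proposition \ref{pro_sim} as a direct corollary of Lemma \ref{theo:ra_exp} via a positive rescaling of $v$. The guiding observation is that the two Lyapunov-like conditions \eqref{eq:ra_exp} and \eqref{eq:ra_exp0} differ only by the upper bound $v(\bm{x}) \leq 1$ on $\mathcal{T}$, which is a pure normalization constraint that can always be met by dividing $v$ by a sufficiently large positive constant. Crucially, such a positive rescaling preserves both the first inequality (which is homogeneous of degree one in $v$) and the second inequality $v \leq 0$ on $\widehat{\mathcal{X}} \setminus \mathcal{X}$, and it also leaves the strict super-level set $\{v > 0\}$ unchanged.

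Concretely, given a bounded function $v : \widehat{\mathcal{X}} \to \mathbb{R}$ satisfying \eqref{eq:ra_exp}, I would set $M := \max\{1,\, \sup_{\bm{x} \in \mathcal{T}} v(\bm{x})\}$, which is finite and strictly positive because $v$ is bounded on $\widehat{\mathcal{X}} \supseteq \mathcal{T}$. I then define $v'(\bm{x}) := v(\bm{x})/M$. Since $1/M > 0$, multiplying the inequalities $E[v(\bm{f}(\bm{x},\bm{u}))] - \lambda v(\bm{x}) \geq 0$ on $\mathcal{X} \setminus \mathcal{T}$ and $v(\bm{x}) \leq 0$ on $\widehat{\mathcal{X}} \setminus \mathcal{X}$ by $1/M$ yields the corresponding inequalities for $v'$, and by construction $v'(\bm{x}) \leq 1$ for every $\bm{x} \in \mathcal{T}$. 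Hence $v'$ satisfies all three conditions of \eqref{eq:ra_exp0}.

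Finally, because $1/M > 0$, the strict super-level sets coincide, i.e.\ $\{\bm{x} \in \mathcal{X} \mid v'(\bm{x}) > 0\} = \{\bm{x} \in \mathcal{X} \mid v(\bm{x}) > 0\}$. Applying Lemma \ref{theo:ra_exp} to $v'$ then shows that this common set is a $0$-reach-avoid set, completing the argument. There is no substantive technical obstacle here; the whole argument is essentially a one-line normalization, and the only minor care required is to take $M \geq 1$ so that the rescaling factor $1/M$ is well-defined and strictly positive regardless of the sign of $\sup_{\bm{x} \in \mathcal{T}} v(\bm{x})$.
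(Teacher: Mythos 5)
Your proposal is correct and follows essentially the same route as the paper: both normalize $v$ by a positive constant $M$ (the paper uses a bound on $|v|$ over all of $\widehat{\mathcal{X}}$, you use $\max\{1,\sup_{\bm{x}\in\mathcal{T}}v(\bm{x})\}$) so that the rescaled function satisfies the additional condition $v\leq 1$ on $\mathcal{T}$ in \eqref{eq:ra_exp0}, then invoke Lemma \ref{theo:ra_exp} and note that the strict super-level set is unchanged. The difference in the choice of $M$ is immaterial.
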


\begin{proof}
 Since $v(\bm{x})$ is a bounded function, there exists $M>0$ such that $|v(\bm{x})|\leq M, \forall \bm{x}\in\widehat{\mathcal{X}}$. Thus, if $v(\bm{x})$ satisfies \eqref{eq:ra_exp}, $v'(\bm{x}):=\frac{v(\bm{x})}{M}$ will satisfies \eqref{eq:ra_exp0}. According to Lemma \ref{theo:ra_exp}, the set $\{\bm{x}\in \mathcal{X}\mid v'(\bm{x})>0\}\neq \emptyset$ is a 0-reach-avoid set. Since $\{\bm{x}\in \mathcal{X}\mid v'(\bm{x})>0\}=\{\bm{x}\in \mathcal{X}\mid v(\bm{x})>0\}$, $\{\bm{x}\in \mathcal{X}\mid v(\bm{x})>0\}\neq \emptyset$ is also a 0-reach-avoid set. 
 \end{proof}
 



If $v(\bm{x})$ satisfies \eqref{eq:ra_exp0}, it will satisfy \eqref{eq:ra_exp}. In the following, we will employ \eqref{eq:ra_exp} for computing CRASs. However, although the requirement $v(\bm{x})\leq 1, \forall \bm{x}\in \mathcal{T}$ is abandoned, we can still ensure that the set $\{\bm{x}\in \mathcal{X}\mid v(\bm{x})>0\}\neq \emptyset$ is a 0-reach-avoid set. 

Assume $\widehat{\mathcal{X}}=\{\bm{x}\in \mathbb{R}^n\mid \hat{h}(\bm{x})< 0\}$, where $\hat{h}(\bm{x})\in \mathbb{R}[\bm{x}]$, and suppose the function $v(\bm{x})$ is sought within the polynomial space. By leveraging the sum of squares decomposition for multivariate polynomials, optimization problem \eqref{eq:ra_exp} can be recast as a semi-definite program, which can be solved efficiently in polynomial time using interior-point methods. The semi-definite program is formulated as follows, i.e., \eqref{eq:sos_ra_exp}. So far, we have specialized 
the nonlinear problem of computing CRASs into a problem of solving a single convex optimization.
\begin{equation}
\label{eq:sos_ra_exp}
\begin{split}
&\max \bm{c}_v \cdot \hat{\bm{w}}_v\\
&{\rm s.t. ~}
\begin{cases}
E[v(\bm{f}(\bm{x},\bm{u}))] - \lambda v (\bm{x}) + s_1(\bm{x})h(\bm{x})-s_2(\bm{x})g(\bm{x})\in \sum[\bm{x}],\\
-v(\bm{x})+s_3(\bm{x})\hat{h}(\bm{x})-s_4(\bm{x})h(\bm{x})\in \sum[\bm{x}],\\
s_1(\bm{x}), s_2(\bm{x}), s_3(\bm{x}), s_4(\bm{x}) \in \sum[\bm{x}],
\end{cases}
\end{split}
\end{equation}
where $\bm{c}_v \cdot \hat{\bm{w}}_v = \int_\mathcal{X} v(\bm{x}) d\bm{x}$, $\hat{\bm{w}}_v$ is the constant vector computed by integrating the monomials in $v(\bm{x})\in \mathbb{R}[\bm{x}]$ over the safe set $\mathcal{X}$, 
and $\bm{c}_v$ is the vector composed of unknown coefficients in $v(\bm{x}) \in \mathbb{R}[x]$.

Below, we use a simple example to illustrate how to calculate CRASs through solving the SOS \eqref{eq:sos_ra_exp}. 

\begin{example}
    \label{ex:running}
    Consider the one-dimensional discrete-time systems:
    \[x(t+1) = x(t) + 0.01(-x(t)-x^2(t)+u(t))\]
    with the safe set $\mathcal{X} = \{x \mid x^2 < 1\}$, the target set $\mathcal{T} = \{x \mid (x-0.6)^2<0.01\}$, and the control input set $\mathcal{U} = \{u \mid -1 \leq u \leq 1\}$. In this example, we set $\lambda$ to $1.01$. The degree of the polynomial $v(\bm{x})$ is set to 4, and the degrees of the polynomials $s_1(\bm{x}), \ldots, s_4(\bm{x})$ are set to 8. By solving the SOS \eqref{eq:sos_ra_exp} with YALMIP \cite{lofberg2004yalmip} and the solver Mosek \cite{aps2019mosek}, we obtain a CRAS set $\{x \mid 0.1391 < x < 0.9299\}$. The computation time for solving \eqref{eq:sos_ra_exp} is $1.78$ seconds. Utilizing Proposition \ref{pro_sim}, we efficiently compute the CRAS using the 0-reach-avoid set, thereby mitigating the nonlinear terms arising from the composition of Lyapunov-like functions and control inputs.
\end{example}

Despite the demonstration of equivalence between $0$-reach-avoid sets in the probabilistic sense and CRASs in Theorem \ref{theo:equal}, the practical computations may yield conservative CRASs using the convex optimization approach (i.e., \eqref{eq:sos_ra_exp}) for calculating $0$-reach-avoid sets to compute CRASs. The main reason for this conservativeness is the choice of the probability distribution on the control inputs. While the equivalence between $0$-reach-avoid sets and CRASs is guaranteed for any probability distributions satisfying the assumption in Theorem \ref{theo:equal}, the specific choice of distribution could significantly affect the computed CRASs, especially when using a fixed parameterized Lyapunov-like function. However, determining the optimal distribution associated with the chosen parameterized Lyapunov-like function is a challenging task, and thus, we have not placed emphasis on this issue in implementing \eqref{eq:sos_ra_exp}. Instead, we have chosen the uniform distribution throughout the numerical examples in this paper. This conservatism can also be understood through 
 the inequality $E[v(\bm{f}(\bm{x},\bm{u}))]\leq \max_{\bm{u}\in \mathcal{U}} v(\bm{f}(\bm{x},\bm{u}))$, which indicates that if a Lyapunov-like function $v(\bm{x})$ satisfies \eqref{eq:ra_exp}, it will also satisfy \eqref{eq:ra_sup}, which thus is a potentially less conservative condition. This suggests that more effective Lyapunov-like functions, leading to less conservative CRASs, may be discovered. To address this conservatism, we will present a method to refine the computed Lyapunov-like function $v(\bm{x})$ by adjusting the probability distribution of control inputs with the help of \eqref{eq:ra_sup} in the following section.


 

\section{Improving Probabilistic CRASs Estimation}
\label{sec:update}
In this section, inspired by the $\epsilon$-greedy strategy widely used in reinforcement learning, we propose an approach that iteratively updates Lyapunov-like functions to compute less conservative CRASs. 

Assume a Lyapunov-like function $v_0(\bm{x})\in \mathbb{R}[\bm{x}]$ is obtained via solving \eqref{eq:sos_ra_exp}. As mentioned before, \[\max_{\bm{u}\in \mathcal{U}}v_0(\bm{f}(\bm{x},\bm{u}))\geq E[v_0(\bm{f}(\bm{x},\bm{u}))]\] holds for $\bm{x}\in \mathcal{X}\setminus \mathcal{T}$, which implies that $v_0(\bm{x})$ is also a solution to constraint \eqref{eq:ra_sup}. This indicates that if a vector-valued control function $\bm{u}_0(\bm{x}):\mathbb{R}^n\rightarrow \mathbb{R}^m$ satisfying $\max_{\bm{u}\in \mathcal{U}}v_0(\bm{f}(\bm{x},\bm{u}))=v_0(\bm{f}(\bm{x},\bm{u}_0(\bm{x})))$ for $\bm{x}\in \mathcal{X}\setminus \mathcal{T}$ is available, solving the following constraint 
\begin{equation}
\label{eq:ra_sup10}
\begin{cases}
v(\bm{f}(\bm{x},\bm{u}_0(\bm{x})))-\lambda v(\bm{x})\geq 0, &\forall \bm{x}\in \mathcal{X}\setminus \mathcal{T},\\
v(\bm{x})\leq 0, &\forall \bm{x}\in \widehat{\mathcal{X}}\setminus \mathcal{X},
\end{cases}
\end{equation} is conducive to exploring new Lyapunov-like functions and thus may yield less conservative CRASs. However, the vector-valued control function $\bm{u}_0(\bm{x}):\mathbb{R}^n\rightarrow \mathbb{R}^m$ is fundamentally challenging to obtain. A traditional approach is to compute an approximation $\tilde{\bm{u}}_0(\bm{x})$  of the vector-valued control function $\bm{u}_0(\bm{x}):\mathbb{R}^n\rightarrow \mathbb{R}^m$, and then use this approximation $\tilde{\bm{u}}_0(\bm{x})$ to refine $v_0(\bm{x})$. Since $v_0(\bm{f}(\bm{x},\tilde{\bm{u}}_0(\bm{x})))$ may not be equal to $\max_{\bm{u}\in \mathcal{U}}v(\bm{f}(\bm{x},\bm{u}))$ for $\bm{x}\in \mathcal{X}\setminus \mathcal{T}$, we employ the 
$\epsilon$-greedy strategy to mitigate this non-optimality issue and cover the optimal one. This strategy assigns a large weight to the approximate control input (i.e., exploitation) and smaller weights to other control inputs (i.e., exploration). This balance between exploration and exploitation is crucial for the computing process, enabling it to find an optimal control input over time. In the following, we will detail our computations.



Firstly, we parameterize the control function $\tilde{\bm{u}}_0(\bm{a},\bm{x})$, which is a polynomial over $\bm{x}$ with unknown coefficients $\bm{a}$. We use this function and a family of data to identify $\bm{a}$ via solving a semi-definite program. 
Specifically, we first sample $N$ states $\{\bm{x}_i\}_{i=1}^N$ from the set $\mathcal{X}\setminus\mathcal{T}$. Since $v_0(\bm{f}(\bm{x}_i, \bm{u}))$ is nonlinear over $\bm{u}$, directly solving $\max_{\bm{u}\in \mathcal{U}}v_0(\bm{f}(\bm{x}_i, \bm{u}))$ is fundamentally challenging.  Thus, we discretize the set $\mathcal{U}$ to obtain a family of control inputs $\{\bm{u}_i\}_{i=1}^M$ and further identify the one such that  $v_0(\bm{f}(\bm{x}_i, \bm{u}))$ achieves the maximum for each sampled state $\bm{x}_i$, $i = 1, \ldots, N$, as shown in \eqref{eq:max_u_1}. Without loss of generality, assume 
\begin{equation}
\label{eq:max_u_1}
\bm{u}_i = ~ \mathop{\argmax}_{\bm{u}\in \{\bm{u}_1, \bm{u}_2, \ldots,\bm{u}_M\}} ~ v(\bm{f}(\bm{x}_i, \bm{u})). 
\end{equation}
After collecting the data $\{(\bm{x}_i,\bm{u}_i)\}_{i=1}^N$, we construct an optimization over $\bm{a}$ to fit these data while ensuring $\tilde{\bm{u}}_0(\bm{a},\bm{x}) \in \mathcal{U}$ for $\bm{x}\in \mathcal{X}\setminus \mathcal{T}$.  The optimization, which can be encoded as a semi-definite program using the SOS decomposition for multivariate polynomials as detailed in \eqref{sos:u} in Appendix \ref{app:sos}, is presented below. 
\begin{equation}
\begin{split}
&\min_{\bm{a}} \sum_{i=1,\ldots,N} \|\tilde{\bm{u}}_0(\bm{a},\bm{x}_i) - \bm{u}_i \|_2^2,\\
{~~ \rm s.t. ~}&
\tilde{\bm{u}}_0(\bm{a},\bm{x}) \in \widehat{\mathcal{U}}, ~\forall \bm{x}\in\mathcal{X}\setminus\mathcal{T},
\end{split}
\label{eq:updata_u_2}
\end{equation}
where $\widehat{\mathcal{U}} = \{\bm{u}\mid \underline{\bm{u}} + \delta \leq \bm{u} \leq \overline{\bm{u}} - \delta\}$ is a subset of the set $\mathcal{U}$, and $\delta$ is a user specified value satisfying $0 < \delta < \min_{1\leq j\leq m}\frac{1}{2}(\underline{\bm{u}}(j)-\overline{\bm{u}}(j))$. The reason that a positive value $\delta$ is introduced to construct such a set $\widehat{\mathcal{U}}$ will be explained later.

Assume $\bm{a}_0$ is the solution obtained by solving \eqref{eq:updata_u_2}. Given that $\tilde{\bm{u}}_0(\bm{a}_0,\bm{x})$ is an approximation of $\bm{u}_0(\bm{x})=\arg\max_{\bm{u}\in \mathcal{U}} v_0(\bm{f}(\bm{x},\bm{u}))$ for $\bm{x}\in \mathcal{X}\setminus \mathcal{T}$, traditional approaches suggest using $\tilde{\bm{u}}_0(\bm{a}_0,\bm{x})$ to update the Lyapunov function $\bm{v}_0(\bm{x})$ by solving \eqref{eq:ra_sup1}. However, this method may not effectively refine the Lyapunov-like function, especially when $\tilde{\bm{u}}_0(\bm{a}_0,\bm{x})$ is not a good approximation, as it does not select the optimal control input for computations. To overcome this shortcoming, we adopt the $\epsilon$-greedy strategy, a widely used technique in reinforcement learning, which assigns a high probability to selecting control inputs close to $\tilde{\bm{u}}_0(\bm{a}_0,\bm{x})$ and a low probability to others. This strategy not only encourages the use of $\tilde{\bm{u}}_0(\bm{a}_0,\bm{x})$, but also considers all control inputs which cover the optimal one in $\mathcal{U}$, thereby facilitating the refinement of the computed Lyapunov-like function. Building on these insights, we introduce a new constraint for updating the Lyapunov-like function, which is detailed in Proposition \ref{new_constraint}. This constraint can be encoded as an SOS program, as illustrated in \eqref{sos:v_2} in Appendix \ref{app:sos}. For the sake of simplicity, we shall denote $\tilde{\bm{u}}_{0}(\bm{x})$ as an abbreviation for $\tilde{\bm{u}}_{0}(\bm{a}_0,\bm{x})$ unless otherwise specified.
\begin{proposition}
\label{new_constraint}
Given positive values $\epsilon\in [0,1]$, $\delta \in (0, \min_{1\leq j\leq m}\frac{1}{2}(\underline{\bm{u}}(j)-\overline{\bm{u}}(j)))$, and $\lambda \in (1,\infty)$, if there exists a function $v(\bm{x})\in \mathbb{R}[\bm{x}]$ such that
\begin{equation}
\label{eq:v_2}
\begin{cases}
\frac{1-\epsilon}{Z}\int_{\tilde{\bm{u}}_0(\bm{x})-\delta}^{\tilde{\bm{u}}_0(\bm{x})+\delta} v(\bm{f}(\bm{x},\bm{u})) d\bm{u} + \frac{\epsilon}{Z}\int_{\underline{\bm{u}}}^{\overline{\bm{u}}} v(\bm{f}(\bm{x},\bm{u})) d\bm{u} \geq \lambda v(\bm{x}),&\forall \bm{x} \in \mathcal{X}\setminus\mathcal{T},\\ 
    v(\bm{x}) \leq 0, &\forall \bm{x} \in \widehat{\mathcal{X}}\setminus\mathcal{X},
    \end{cases}
    \end{equation}
    where $Z = (1-\epsilon)(2\delta)^m + \epsilon \prod_{j=1}^m (\overline{\bm{u}}(j) - \underline{\bm{u}}(j))$, 
    then the set $\{\bm{x}\in \mathcal{X}\mid v(\bm{x})>0\}$ is a CRAS.
    \label{coro:2}
\end{proposition}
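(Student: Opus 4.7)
The key observation is that the left-hand side of the first inequality in \eqref{eq:v_2} is exactly an expectation of $v(\bm{f}(\bm{x},\bm{u}))$ under a state-dependent mixture distribution on $\mathcal{U}$: with weight $w_1 = (1-\epsilon)(2\delta)^m/Z$ the control $\bm{u}$ is drawn uniformly on the ball $\mathcal{B}(\tilde{\bm{u}}_0(\bm{x}),\delta)$, and with weight $w_2 = \epsilon\prod_{j=1}^m(\overline{\bm{u}}(j)-\underline{\bm{u}}(j))/Z$ it is drawn uniformly on $\mathcal{U}$. The choice of $Z$ in the proposition is precisely what makes $w_1+w_2=1$, and the shrinkage to $\widehat{\mathcal{U}}$ enforced in \eqref{eq:updata_u_2}, combined with the upper bound on $\delta$, guarantees $\mathcal{B}(\tilde{\bm{u}}_0(\bm{x}),\delta)\subseteq\mathcal{U}$, so the mixture is a legitimate probability density $q(\cdot\mid\bm{x})$ supported in $\mathcal{U}$ for every $\bm{x}\in\mathcal{X}\setminus\mathcal{T}$. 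A direct calculation then rewrites the first line of \eqref{eq:v_2} as
$$E_{\bm{u}\sim q(\cdot\mid\bm{x})}\bigl[v(\bm{f}(\bm{x},\bm{u}))\bigr] \geq \lambda v(\bm{x}),\qquad\forall\bm{x}\in\mathcal{X}\setminus\mathcal{T}.$$

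With this identification in hand, my plan is to reuse the chain Proposition \ref{pro_sim}~$\Rightarrow$~Theorem \ref{theo:equal} in a Markov-kernel setting. Although Proposition \ref{pro_sim} and its underlying Lemma \ref{theo:ra_exp} are phrased for an i.i.d.\ product measure $P^\infty$ on control sequences, their proofs rely only on the pointwise inequality $E[v(\bm{f}(\bm{x},\bm{u}))]\geq\lambda v(\bm{x})$ together with the sign condition $v\leq 0$ on $\widehat{\mathcal{X}}\setminus\mathcal{X}$. I would therefore replace the i.i.d.\ law by the Markov kernel $q(\cdot\mid\bm{x})$, which induces a Markov control process to which the same supermartingale-style argument applies verbatim, allowing one to conclude that $\{\bm{x}\in\mathcal{X}\mid v(\bm{x})>0\}$ is a $0$-reach-avoid set (in the sense of Definition \ref{def:prs}) under this randomization of the control.

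Finally, I would invoke the direction $\mathcal{R}_0\subseteq\mathcal{R}$ of Theorem \ref{theo:equal}: for any $\bm{x}_0\in\{v>0\}\cap\mathcal{X}$, the reach-avoid event has strictly positive probability under $q$, so the set of deterministic control sequences that achieve the reach-avoid objective cannot be $P^\infty$-null; in particular, at least one such policy $\pi$ exists, which is precisely the definition of $\bm{x}_0\in\mathcal{R}$. Thus $\{\bm{x}\in\mathcal{X}\mid v(\bm{x})>0\}\subseteq\mathcal{R}$, and the claim follows.

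The main obstacle is the second step: carefully verifying that Lemma \ref{theo:ra_exp}/Proposition \ref{pro_sim}, originally stated for a state-independent law on $\bm{u}$, extends to the Markov kernel $q(\cdot\mid\bm{x})$. The delicate point is Assumption \ref{assum}, which must now be interpreted pointwise in $\bm{x}$. When $\epsilon>0$ it holds automatically, because the exploration component $\mathrm{Unif}_{\mathcal{U}}$ alone assigns positive mass to every measurable subset of $\mathcal{U}$ with nonempty interior, uniformly in $\bm{x}$; when $\epsilon=0$ it may fail, but only the direction $\mathcal{R}_0\subseteq\mathcal{R}$ of Theorem \ref{theo:equal} is used above, and that direction does not invoke Assumption \ref{assum}. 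Once this extension is made precise, the remainder of the argument is an essentially syntactic reduction to the reach-avoid-set machinery already established in Section \ref{sec:ra}.
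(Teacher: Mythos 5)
Your opening observation --- that the normalization $Z$ makes the left-hand side of \eqref{eq:v_2} a genuine expectation of $v(\bm{f}(\bm{x},\bm{u}))$ under a mixture density supported inside $\mathcal{U}$ (thanks to $\tilde{\bm{u}}_0(\bm{x})\in\widehat{\mathcal{U}}$ and the bound on $\delta$) --- is exactly the right starting point, but you then take it in a direction that leaves a real gap. Your plan routes the conclusion through an extension of Lemma \ref{theo:ra_exp}/Proposition \ref{pro_sim} from the i.i.d.\ law $P^{\infty}$ to the state-dependent Markov kernel $q(\cdot\mid\bm{x})$. That extension is asserted (``the same supermartingale-style argument applies verbatim'') but not proved, and it cannot be checked from within this paper: Lemma \ref{theo:ra_exp} is imported from an external reference, so there is no access to its proof to confirm that it uses nothing beyond the one-step conditional inequality. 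Worse, the paper itself explicitly flags the corresponding claim --- whether the set computed from \eqref{eq:v_2} remains a $0$-reach-avoid set once the density becomes state-dependent --- as an \emph{open question} in the discussion following Example \ref{ex:running_2}. So the load-bearing step of your argument is precisely the one the authors declined to claim.

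The paper avoids all of this with a one-line deduction that your own first paragraph nearly hands you: since the two weights sum to $1$ and both integration regions lie in $\mathcal{U}$, the left-hand side of \eqref{eq:v_2} is a convex average of values $v(\bm{f}(\bm{x},\bm{u}))$ over $\bm{u}\in\mathcal{U}$ and is therefore bounded above by $\max_{\bm{u}\in\mathcal{U}}v(\bm{f}(\bm{x},\bm{u}))$. Hence any $v$ satisfying \eqref{eq:v_2} also satisfies the deterministic condition \eqref{eq:ra_sup}, and Proposition \ref{prop:ra_sup} immediately yields that $\{\bm{x}\in\mathcal{X}\mid v(\bm{x})>0\}$ is a CRAS --- no probabilistic machinery, no Theorem \ref{theo:equal}, and no kernel extension needed. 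If you want to salvage your route, you would have to actually carry out the supermartingale argument for the kernel $q$ (and also redo the boundedness/rescaling step of Proposition \ref{pro_sim} in that setting); as written, the proposal does not establish the proposition.
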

\begin{proof}
    For any $\epsilon \in [0, 1]$ and $\delta \in (0, \min_{1\leq j\leq m}\frac{1}{2}(\underline{\bm{u}}(j)-\overline{\bm{u}}(j)))$, we have 
    \[\frac{1-\epsilon}{Z}\int_{\tilde{\bm{u}}_0(\bm{x})-\delta}^{\tilde{\bm{u}}_0(\bm{x})+\delta}  d\bm{u} + \frac{\epsilon}{Z}\int_{\underline{\bm{u}}}^{\overline{\bm{u}}} d\bm{u}=1\]
    and thus, 
    \begin{equation*}
        \max_{\bm{u}\in \mathcal{U}}v(\bm{f}(\bm{x},\bm{u})) \geq \frac{1-\epsilon}{Z}\int_{\tilde{\bm{u}}_0(\bm{x})-\delta}^{\tilde{\bm{u}}_0(\bm{x})+\delta} v(\bm{f}(\bm{x},\bm{u})) d\bm{u} + \frac{\epsilon}{Z}\int_{\underline{\bm{u}}}^{\overline{\bm{u}}} v(\bm{f}(\bm{x},\bm{u})) d\bm{u}
    \end{equation*}
    Thus, if there exists $v(\bm{x}) \in \mathbb{R}[\bm{x}]$ satisfying constraint \eqref{eq:v_2}, then $v(\bm{x})$ must satisfy constraint \eqref{eq:ra_sup}. According to Proposition \ref{prop:ra_sup}, the conclusion holds.
\end{proof}

Similar to the constraint \eqref{eq:ra_exp}, the problem of solving \eqref{new_constraint} can also be relaxed into a semi-definite program, which is shown in \eqref{sos:v_2} in Appendix, via the SOS decomposition for multivariate polynomials. 

\begin{example}
    \label{ex:running_2}
    To illustrate the procedure for enlarging the CRAS computed via SOS optimization (Eq. \eqref{eq:sos_ra_exp}), we employ the system described in Example \ref{ex:running}. Within Example \ref{ex:running}, an initial CRAS $\{x \mid 0.1391 < x < 0.9299\}$ was computed. 
    
    We set $\delta = 0.1$ and $\epsilon=0.5$.  Then, we sample $50$ states from $\mathcal{X} \setminus \mathcal{T}$, and discretize the control set $\mathcal{U}$ into $5$ candidate inputs. For each sampled state, the optimal control input is obtained by solving \eqref{eq:max_u_1}. Subsequently, a 6-degree polynomial controller is fitted using \eqref{eq:updata_u_2} via solving the semidefinite program \eqref{sos:u} (visualized in Fig. \ref{fig:running}). Since the controller is constrained to the set $\widehat{\mathcal{U}}=[-0.9,0.9]$, its $\delta$-neighborhood (represented by the blue region in Fig. \ref{fig:running}) remains within $U$, ensuring the validity of the resulting CRAS. Finally, via solving the constraint \eqref{eq:ra_exp} with $\delta=0.1$ and $\epsilon=0.5$ using the sem-definite program \eqref{sos:v_2}, we obtain an expanded CRAS $\{x \mid -0.0290 < x < 0.9480\}$, which is larger than the initial one.

    \begin{figure}[ht]
    \centering
    \includegraphics[width=0.95\linewidth]{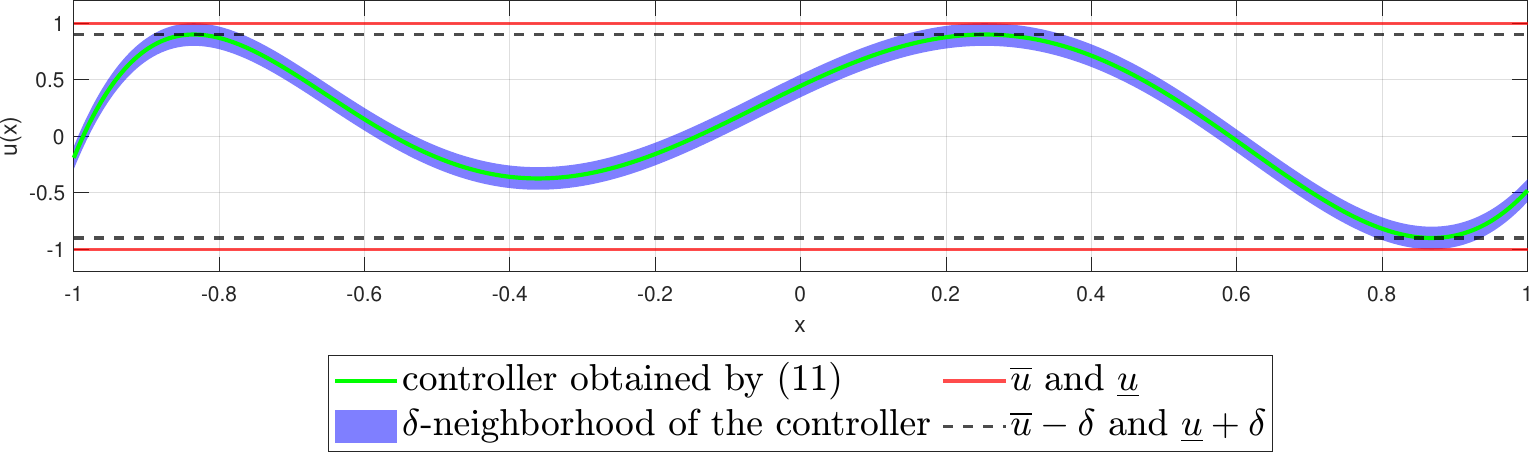}
    \caption{The controller obtained by solving \eqref{eq:updata_u_2}.}
    \label{fig:running}
    \end{figure}
\end{example}

The expression $\frac{1-\epsilon}{Z}\int_{\tilde{\bm{u}}_0(\bm{x})-\delta}^{\tilde{\bm{u}}_0(\bm{x})+\delta} v(\bm{f}(\bm{x},\bm{u})) d\bm{u} + \frac{\epsilon}{Z}\int_{\underline{\bm{u}}}^{\overline{\bm{u}}} v(\bm{f}(\bm{x},\bm{u})) d\bm{u}$ within constraint \eqref{eq:v_2} can be interpreted as the expected value of $v(\bm{f}(\bm{x},\bm{u}))$ with respect to a probability distribution characterized by the following probability density:
\begin{equation}
    \rho(\bm{u})=
    \begin{cases}
    \frac{1}{Z}, &\text{~if~} \bm{u}\in [\tilde{\bm{u}}_0(\bm{x})-\delta, \tilde{\bm{u}}_0(\bm{x})+\delta] \subseteq \mathcal{U},\\
    \frac{\epsilon}{Z}, & \text{~if~} \bm{u}\in \mathcal{U}\setminus [\tilde{\bm{u}}_0(\bm{x})-\delta, \tilde{\bm{u}}_0(\bm{x})+\delta].
    \end{cases}
\end{equation}
We can guarantee $[\tilde{\bm{u}}_0(\bm{x})-\delta, \tilde{\bm{u}}_0(\bm{x})+\delta] \subseteq \mathcal{U}$, since we require  $\tilde{\bm{u}}_0(\bm{x}) \in \widehat{\mathcal{U}}, ~\forall \bm{x}\in\mathcal{X}\setminus\mathcal{T}$ holds in \eqref{eq:updata_u_2}. \textit{This is the reason that a positive value $\delta>0$ is introduced in the optimization} \eqref{eq:updata_u_2}. In contrast to constraint \eqref{eq:ra_exp}, the probability density function in constraint \eqref{eq:v_2} becomes state-dependent rather than state-independent. This change poses an open question as to whether the computed CRAS $\{\bm{x} \in \mathcal{X} \mid v(\bm{x}) > 0\}$ qualifies as a 0-reach-avoid set.


When $\epsilon=1$, \eqref{eq:v_2} is equivalent to \eqref{eq:ra_exp} under the assumption of a uniform distribution imposed over the control set $\mathcal{U}$, as the expression $$
\frac{1-\epsilon}{Z}\int_{\tilde{\bm{u}}_0(\bm{x})-\delta}^{\tilde{\bm{u}}_0(\bm{x})+\delta} v(\bm{f}(\bm{x},\bm{u})) d\bm{u} + \frac{\epsilon}{Z}\int_{\underline{\bm{u}}}^{\overline{\bm{u}}} v(\bm{f}(\bm{x},\bm{u})) d\bm{u}
$$ reduces to $
\frac{1}{Z'}\int_{\underline{\bm{u}}}^{\overline{\bm{u}}} v(\bm{f}(\bm{x},\bm{u})) d\bm{u}$, where $Z'=\prod_{j=1}^m (\overline{\bm{u}}(j) - \underline{\bm{u}}(j))$. Conversely, if $\epsilon=0$, the expression becomes $$
\frac{1-\epsilon}{Z}\int_{\tilde{\bm{u}}_0(\bm{x})-\delta}^{\tilde{\bm{u}}_0(\bm{x})+\delta} v(\bm{f}(\bm{x},\bm{u})) d\bm{u} + \frac{\epsilon}{Z}\int_{\underline{\bm{u}}}^{\overline{\bm{u}}} v(\bm{f}(\bm{x},\bm{u})) d\bm{u}
$$ which simplifies to 
\begin{equation}
\label{nei_u0}
\frac{1}{Z^{''}}\int_{\tilde{\bm{u}}_0(\bm{x})-\delta}^{\tilde{\bm{u}}_0(\bm{x})+\delta} v(\bm{f}(\bm{x},\bm{u})) d\bm{u},
\end{equation}
 with $Z^{''}=(2\delta)^m$. This indicates that only control inputs within a $\delta$-neighborhood of $\tilde{\bm{u}}_0(\bm{x})$ are considered for computation. This approach is akin to traditional methods that use $\tilde{\bm{u}}_0(\bm{a}_0,\bm{x})$ to update the Lyapunov function $\bm{v}_0(\bm{x})$ by solving \eqref{eq:ra_sup1}, with a minor adjustment. However, this slight modification, which involves considering control inputs within a $\delta$-neighborhood of $\tilde{\bm{u}}_0(\bm{x})$ rather than solely $\tilde{\bm{u}}_0(\bm{x})$, could offer advantages for computing less conservative CRASs, particularly when $\tilde{\bm{u}}_0(\bm{x})$ is not a suitable approximation. Besides, since $Z$ appears in the denominator in the constraint, $$
\frac{1-\epsilon}{Z}\int_{\tilde{\bm{u}}_0(\bm{x})-\delta}^{\tilde{\bm{u}}_0(\bm{x})+\delta} v(\bm{f}(\bm{x},\bm{u})) d\bm{u} + \frac{\epsilon}{Z}\int_{\underline{\bm{u}}}^{\overline{\bm{u}}} v(\bm{f}(\bm{x},\bm{u})) d\bm{u} - \lambda v(\bm{x})\geq 0, \forall \bm{x}\in \mathcal{X}\setminus \mathcal{T},
$$ we cannot take $\epsilon=0$ and $\delta=0$ simultaneously.  When $\epsilon=0$ and $\delta=0$, traditional methods that use $\tilde{\bm{u}}_0(\bm{a}_0,\bm{x})$ to update the Lyapunov function $\bm{v}_0(\bm{x})$ can be applied by solving \eqref{eq:ra_sup1}.

 

The above procedure, which updates the Lyapunov-like function $v_0(\bm{x})$ by adjusting the probability distribution on control variables, can be iterated to compute less conservative CRASs. This iterative procedure is outlined in Alg. \ref{alg}.
\begin{algorithm}[h]
\caption{An Iterative Procedure for Estimating CRASs}
\label{alg}
\begin{algorithmic}[1]
\REQUIRE an initial $\epsilon \in [0, 1)$, $\delta \in (0, \min_{1\leq j\leq m}\frac{1}{2}(\underline{\bm{u}}(j)-\overline{\bm{u}}(j)))$, $\lambda > 1$, maximum iteration numbers $K$, the degree of each polynomial used in SOS, system \eqref{eq:systems}, safe set $\mathcal{X}$, target set $\mathcal{T}$, control input set $\mathcal{U}$.
\ENSURE a CRAS $\{\bm{x}\in \mathcal{X} \mid \bigvee_{k=0}^K v_k(\bm{x}) > 0\}$.
\STATE Solve SOS \eqref{eq:sos_ra_exp} to obtain an initial CRAS $\{\bm{x}\in \mathcal{X} \mid v_0(\bm{x}) > 0\}$;
\FOR{$k\gets 1,\ldots, K$}
\STATE yield the family of data using optimization \eqref{eq:max_u_1} based on $v_{k-1}(\bm{x})$;
\STATE update the controller $\bm{u}_k(\bm{x})$ by solving SOS \eqref{sos:u};
\STATE solve SOS \eqref{sos:v_2} to obtain CRAS $\{\bm{x}\in \mathcal{X} \mid v_k(\bm{x}) > 0\}$ using the controller $\bm{u}_k(\bm{x})$;
\STATE update $\epsilon$;
\ENDFOR
\STATE \textbf{return} $\{\bm{x}\in \mathcal{X} \mid \bigvee_{k=0}^K v_i(\bm{x}) > 0\}$.
\end{algorithmic}
\end{algorithm}

\begin{remark}
In reinforcement learning, the $\epsilon$-greedy strategy, particularly with a decreasing $\epsilon$ throughout the iterative process, has been shown to achieve both an enlarged radius of convergence and an improved convergence rate \cite{zhang2024convergence}. Furthermore, a gradually decreasing $\epsilon$ often leads to superior performance compared to maintaining a fixed value. Building on this principle, our algorithm systematically reduces $\epsilon$ to effectively expand CRASs.
\end{remark}
\begin{remark}
Our framework is highly extensible and is not restricted to reach-avoid problems and constraint \eqref{eq:ra_exp}. It can be applied to compute various Lyapunov-like functions such as control barrier certificates (CBCs) for safety verification and control Lyapunov functions for stability analysis in discrete-time systems. In Appendix \ref{app:cbf} and Subsection \ref{sub:com}, we illustrate how our framework can be extended to address CBCs for safety verification in discrete-time systems and showcase its effectiveness through experiments.
\end{remark}

\begin{example}
    \label{ex:running_3}
    Consider again the system in Example \ref{ex:running}. We initialize $\epsilon_0 = 0.5$ and iteratively update it according to the rule $\epsilon_{t+1} = 0.8\epsilon_{t}$. The parameters $\delta$ and $K$ are set to $0.1$ and $5$, respectively. Applying Algorithm \ref{alg} results in a final CRAS of $\{x \mid -0.7157 < x < 0.9480\}$. Compared to the initial CRAS obtained in Example \ref{ex:running}, the volume of the CRAS increases by approximately $110\%$. The total computation time for Algorithm \ref{alg} is $8.75$ seconds.
\end{example}

\section{Experiments}
\label{sec:exp}

In this section, we first demonstrate the effectiveness of Alg. \ref{alg} through a series of experiments, and conduct a comprehensive discussion on the robustness of Alg. \ref{alg}. Afterwards, to evaluate the performance and scalability of our work, we extend the semi-definite program \eqref{eq:sos_ra_exp} to safety verification for discrete-time systems via the search of control barrier functions and compare the results with two state-of-the-art methods, Fossil \cite{edwards2024fossil} and DeepICBC \cite{ren2024formal}. Additional details about the experimental setups and results can be found in Appendix \ref{app:exp}.

In the experiment, we use Monte Carlo sampling to estimate the volume of the calculated CRAS. We sample $10^6$ points uniformly within the safe set and count how many of those points, denoted as $N$, are inside the CRAS. The ratio $\gamma = \frac{N}{10^6}$ serves as an estimate of the volume of the CRAS. For this purpose, we normalize the volume of the safe set $\mathcal{X}$ to one.

\subsection{Computation of CRASs}
In this subsection, we demonstrate the effectiveness of Algorithm \ref{alg} through eight benchmark problems. To the best of our knowledge, no prior work has been specifically designed for calculating the CRAS for deterministic discrete-time systems. To highlight the advantages of the $\epsilon$-greedy strategy, we establish a baseline using a special case of this strategy, where $\epsilon = \delta = 0$. In this scenario, the fitted controller is applied directly, and the CRAS is iteratively solved based on Corollary \ref{coro:ra}. We refer to this case as the greedy strategy.

\begin{example}
    \label{exp:ra_2d_vanderpol} Consider the VanderPol oscillator in \cite{henrion2013convex},
    \begin{equation*}
        \label{eq:exp3}
        \begin{cases}
        x(t+1)=x(t) + 0.01(-2y(t)),\\
        y(t+1)=y(t) + 0.01(0.8x(t)-10(y(t)-0.21)y(t)+u(t)),
        \end{cases}
    \end{equation*}
    with the safe set $\mathcal{X}=\{\,(x,y)^{\top}\mid x^2+y^2-1 < 0\,\}$, the target set $\mathcal{T}=\{\,(x,y)^{\top}\mid x^2+y^2-0.01< 0\,\}$, and the control input set $\mathcal{U} = \{u\mid -3 \leq u \leq 3\}$. 
\end{example}
    
In this example, we initialize $\epsilon_0 = 0.3$ and iteratively update it using the formula $\epsilon_{t+1} = 0.5\epsilon_{t}$. We set the parameter $\delta$ to $0.25$ and utilize SOS \eqref{eq:sos_ra_exp} to obtain an initial CRAS, which is estimated to have a volume of $0.5428$. Based on this initial CRAS, we apply the iterative CRAS expansion strategy outlined in Alg. \ref{alg}, which results in a significant expansion of the CRAS within $10$ iterations, ultimately achieving a volume of approximately $0.9070$. The performance of our method is visualized in Fig. \ref{fig:ra_1}. For comparison, we also employed the greedy strategy under the same parameter settings, yielding a CRAS with a volume of $0.7772$ (as shown in Fig. \ref{fig:ra_1_u}). These results demonstrate that our approach  (Alg. \ref{alg}) with $\epsilon$-greedy strategy can effectively expand the CRAS, outperforming the greedy strategy in this example. 
\begin{figure}[t]
    \centering
    \subfigure[Example \ref{exp:ra_2d_vanderpol}]{
    \includegraphics[height=4.5cm,width=4.5cm]{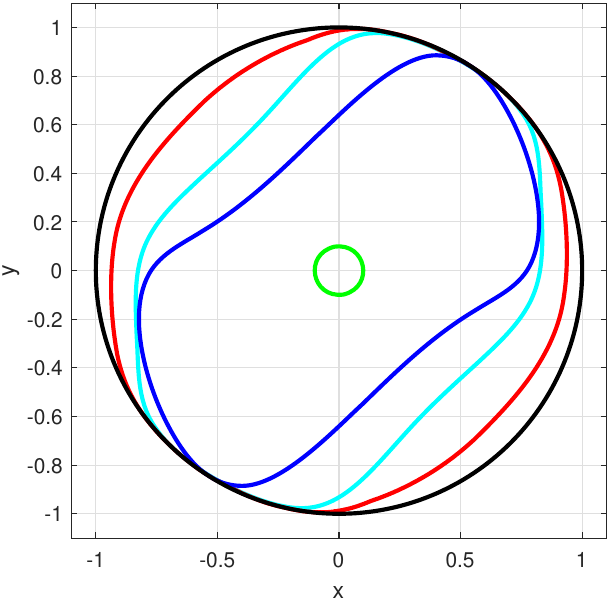}
    \label{fig:ra_1}
    }
    \subfigure[Example \ref{exp:ra_2d_vanderpol}]{
    \includegraphics[height=4.5cm,width=4.5cm]{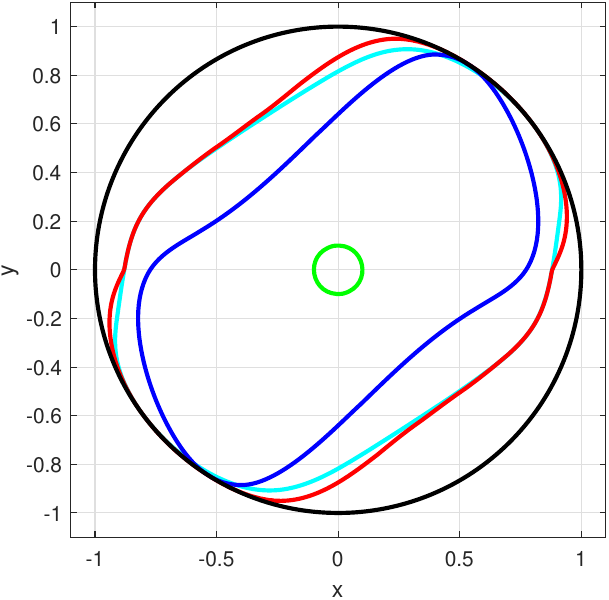}
    \label{fig:ra_1_u}
    }\\
    \subfigure[Example \ref{exp:ra_2d_2021}]{
    \includegraphics[height=4.5cm,width=4.5cm]{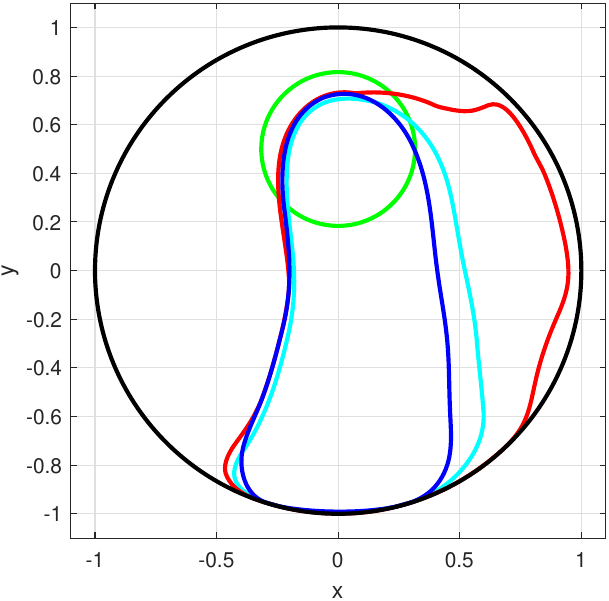}
    \label{fig:ra_2}
    }
    \subfigure[Example \ref{exp:ra_3}]{
    \includegraphics[height=4.5cm,width=4.5cm]{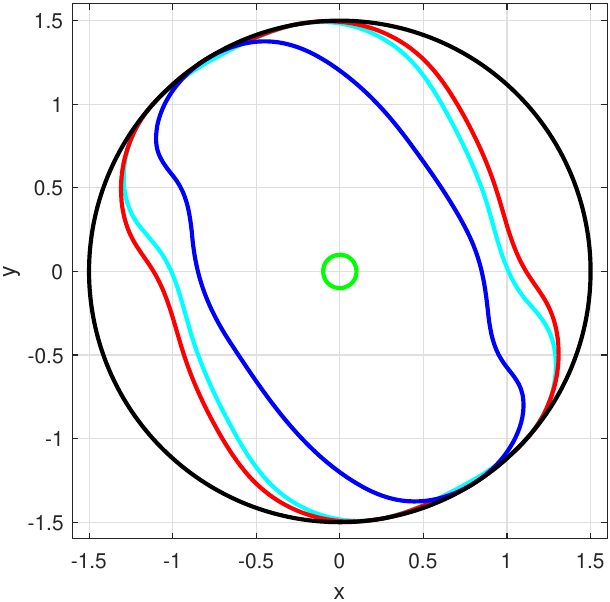}
    \label{fig:ra_3}
    }
    \caption{Black and \textcolor{green}{green} curves represent the boundaries of $\mathcal{X}$ and $\mathcal{T}$, respectively. \textcolor{blue}{Blue} and \textcolor{red}{red} curves illustrate the CRAS calculated using SOS \eqref{eq:sos_ra_exp} and Alg. \ref{alg}, respectively. In Fig. \ref{fig:ra_1} and \ref{fig:ra_1_u}, \textcolor{cyan}{cyan} curve shows the CRAS boundary for $k=3$ in Alg. \ref{alg}, and in Fig. \ref{fig:ra_2}, it shows the boundary for $k=5$.}
\end{figure}

\begin{figure}[t]
    \centering
    \includegraphics[height=3.0cm,width=11cm]{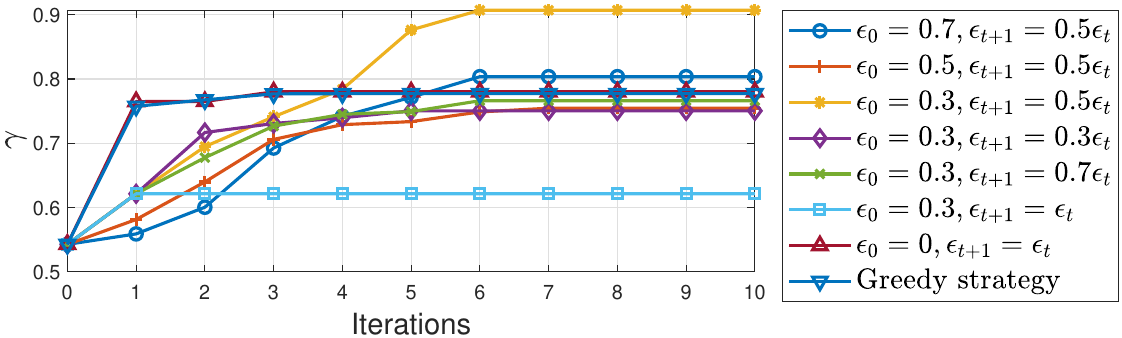}
    \caption{Results of Alg. \ref{alg} under different $\epsilon$ settings in Example \ref{exp:ra_2d_vanderpol}, where $\gamma$ denotes an estimate of the CRAS volume via random sampling.}
    \label{fig:epsilon}
\end{figure}

Additionally, we evaluate the impact of different $\epsilon$ settings on Alg. \ref{alg}, as summarized in Fig. \ref{fig:epsilon}. Under different initial values and update strategies of $\epsilon$, the convergence rates and resulting values of $\gamma$ vary. Nevertheless, all configurations successfully expand the CRAS. Notably, when $\epsilon$ is kept constant at $0.3$, the final result is relatively conservative, highlighting the necessity of adjusting $\epsilon$ during the iterative process. Furthermore, when $\epsilon$ is set to $0$ (considering only the neighborhood of the fitted controller), the final volume is $0.7801$, which is slightly better than the result obtained by the greedy strategy. Although $\epsilon=0, \delta=0.25$ leads to faster convergence, the final CRAS is smaller than the CRAS obtained when $\epsilon=0.3, \delta=0.25$, which highlights the advantage of the $\epsilon$-greedy strategy.

\begin{example}
\label{exp:ra_2d_2021}
    Consider the discrete-time model in \cite{xue2021reach},
    \begin{equation}
    \label{exp:1}
    \begin{cases}
    x(t+1)=x(t) + 0.01(-0.5x(t)-0.5y(t)+0.5x(t)y(t)),\\
    y(t+1)=y(t) + 0.01(-0.5y(t)+1+u(t)),
    \end{cases}
    \end{equation}
    with the safe set $\mathcal{X}=\{\,(x,y)^{\top}\mid x^2+y^2-1 < 0\,\}$, the target set $\mathcal{T}=\{\,(x,y)^{\top}\mid x^2+(y-0.5)^2-0.1< 0\,\}$, and the control input set $\mathcal{U}=\{\,u\mid -1 \leq u \leq 1\,\}$.
\end{example}

    In this example, we initialize $\epsilon_0=0.3$ and update it using the rule $\epsilon_{t+1} = 0.5\epsilon_{t}$. The parameter $\delta$ is set to $0.01$. Based on SOS \eqref{eq:sos_ra_exp}, we obtain a CRASs with a volume of $0.3509$. Afterwards, we consider two different scenarios for fitting the controller. In the first scenario, we utilize a second-degree polynomial as the template, sampling the state on a $10 \times 10$ grid and discretizing $10$ control inputs within the control set. In the second scenario, we employ a fourth-degree polynomial as the template, with the state sampled on a $20 \times 20$ grid and $20$ control inputs discretized within the control set. In the first scenario, Alg.  \ref{alg} produces a CRAS with a volume of $0.5716$, as illustrated in Fig. \ref{fig:ra_2}, while the greedy strategy achieves a lower CRAS volume of $0.4330$. This discrepancy primarily arises from the limited sampling data and the lower degree of the polynomial template, which diminishes the quality of the fitted controller, thereby rendering the epsilon-greedy strategy more effective. Notably, when $\epsilon_0=0$ and $\delta = 0.01$, the CRAS volume reaches $0.5075$, surpassing the result obtained using the greedy strategy; in the second scenario, a better controller can be fitted, resulting in CRAS volumes of $0.6080$ and $0.6005$ using the greedy strategy and Alg. \ref{alg}, respectively, narrowing the performance gap. This indicates that the greedy strategy is more sensitive to the quality of the fitted controller, whereas Alg. \ref{alg} is only moderately impacted. While employing higher-order polynomials and more extensive sampling generally results in improved outcomes, it also leads to increased computational time and can complicate the optimization problem. In the second scenario, Alg. \ref{alg} takes 684 seconds to execute, in contrast to just 117 seconds in the first scenario. Additionally, in this example, the MOSEK solver fails when a higher-order polynomial is used as controller templates.

\begin{example}
    \label{exp:ra_3} Consider the system adapted from \cite{tan2008stability},
    \begin{equation*}
        \label{eq:exp4}
        \begin{cases}
        x(t+1)=x(t) + 0.01(y(t) + x(t)u_1(t)),\\
        y(t+1)=y(t) + 0.01(-(1-x^2(t))x(t) - y(t) + y(t)u_2(t)),
        \end{cases}
    \end{equation*}
    with the safe set $\mathcal{X}=\{\,(x,y)^{\top}\mid x^2+y^2-1.5^2 < 0\,\}$, the target set $\mathcal{T}=\{\,(x,y)^{\top}\mid x^2+y^2-0.01< 0\,\}$, and the control set $\mathcal{U} = \{(u_1,u_2)^\top\mid -0.5 \leq u_1, u_2 \leq 0.5\}$.
\end{example}
    In this example, we begin by initializing $\epsilon_0 = 0.3$ and updating it using the rule $\epsilon_{t+1} = 0.5 \epsilon_{t}$. The parameter $\delta$ is set to $0.1$. The CRAS produced by Alg. \ref{alg} is depicted in Fig. \ref{fig:ra_3}. The initial volume of the CRAS, calculated using SOS \eqref{eq:sos_ra_exp}, is $0.5344$. After applying Algorithm \ref{alg}, the volume of the CRAS expands to $0.7887$, resulting in a significant reduction in conservatism. In contrast, the greedy strategy is unable to achieve any expansion of the CRAS. The computation times for SOS \eqref{eq:sos_ra_exp}, Algorithm \ref{alg}, and the greedy strategy are $4.56$ seconds, $132$ seconds, and $127$ seconds, respectively.

 Due to space limitations, Examples \ref{exp:ra_2d_2008}--\ref{exp:ra_6d} are provided in Appendix \ref{app:exp}. These experimental results are summarized in Table  \ref{table}, where $n$ is the state dimension, $d_v$ and $d_u$ denote the degree of the polynomial used to parameterize $v(\bm{x})$ and controllers. The time unit in the table is seconds. Our method successfully computes the CRAS for all examples. As shown in Table  \ref{table}, SOS \eqref{eq:sos_ra_exp} effectively solves the initial CRAS, and Alg. \ref{alg} significantly expands it. Compared with the greedy strategy, Alg. \ref{alg} generates less conservative CRAS in most cases, highlighting the advantages of the $\epsilon$-greedy strategy.
\begin{table}[t]
\centering
\caption{Evaluation on CRASs for Examples \ref{exp:ra_2d_2008}-\ref{exp:ra_6d}, with $\gamma$ serving as an estimate of the CRAS volume obtained through random sampling.}
\setlength{\tabcolsep}{2.3mm}{
    \begin{tabular}{*{10}{c}}
\toprule
\multirow{2}{*}{Example} & \multirow{2}{*}{$n$} & \multirow{2}{*}{$d_v$} & \multirow{2}{*}{$d_u$} & \multicolumn{2}{c}{SOS \eqref{eq:sos_ra_exp}} & \multicolumn{2}{c}{Greedy Strategy} & \multicolumn{2}{c}{Alg. \ref{alg}} \\ 
\cmidrule(lr){5-6}\cmidrule(lr){7-8}\cmidrule(lr){9-10}
& & & & $\gamma$ & Time & $\gamma$ & Time & $\gamma$ & Time \\ \midrule
\ref{exp:ra_2d_2008} & 2 & 8 & 2 & $0.6951$ & $4.29$  & $0.7634$ & $225$ & $0.8005$ & $111$ \\
\ref{exp:ra_2d_2000} & 2 & 8 & 4 & $0.8380$ & $3.82$  & $0.8380$ & $43.5$ & $0.8971$ & $52.9$ \\
\ref{exp:ra_3d_vanderpol} & 3 & 8 & 2 & $0.4247$ & $188$  & $0.6825$ & $680$ & $0.6893$ & $765$ \\ 
\ref{exp:ra_4d} & 4 & 4 & 2 & $0.8386$ & $4.76$  & $0.9044$ & $51.4$ & $0.9999$ & $48.6$ \\ 
\ref{exp:ra_6d} & 6 & 4 & 2 & $0.6575$ & $157$  & $0.7231$ & $2448$ & $0.7364$ & $2402$ \\ 
\bottomrule
\end{tabular}}
\label{table}
 
\end{table}
Our method heavily relies on solving SOS programming problems, which are exponential in the size of the state variables and the degree of the polynomials. Consequently, our method also encounters challenges in solving large-scale problems. This is because the number of constraints required to express the problem as an SDP grows rapidly, leading to an explosion in the number of variables and constraints. This issue is often referred to as the ``curse of dimensionality''. As a result, our method faces significant challenges when tackling large-scale problems. Nonetheless, despite this exponential complexity, our approach has achieved remarkable success in computing CRASs for a variety of systems. Furthermore, it outperforms existing methods in terms of efficiency, as will be demonstrated in the following subsection.

\subsection{Comparisons}
\label{sub:com}
According to our knowledge, there are no existing methods specifically tailored for computing CRASs for discrete-time systems. In contrast, there are methods for finding Lyapunov-like functions, such as control barrier functions, which are a common tool for safety verification. Our method can also be adapted for finding control barrier functions. Further details on control barrier functions and associated safety analysis are provided in Appendix \ref{app:cbf}. Therefore, in this section, we evaluate our approach against state-of-the-art techniques, including those implemented in Fossil \cite{edwards2024fossil} and DeepICBC \cite{ren2024formal}, by computing control barrier functions for safety verification. Moreover, to provide a fair comparison, we will apply our method to synthesizing control barrier functions in the following discussion,  based on all polynomial system benchmarks tested in tools Fossil and DeepICBC. Specifically, Example \ref{exp:safe1} comes from Fossil, while Examples \ref{exp:safe2}-\ref{exp:safe4} comes from DeepICBC. To further assess the scalability of our method for high-dimensional systems, we additionally test three more examples, denoted as Example \ref{exp:safe5}-\ref{exp:safe_lorenz12}. Each experiment is subject to a maximum time limit of 8 hours. The experiment results are summarized in Table  \ref{tab:safe}.
\begin{table}[t]
\centering
\caption{Safety Verification Evaluation (The dash `-' indicates verification failures)}
\setlength{\tabcolsep}{3mm}{
    \begin{tabular}{*{5}{c}}
\toprule
\multirow{2}{*}{Example} & \multirow{2}{*}{$n$}  &
\multicolumn{3}{c}{Time (seconds)}\\
\cmidrule(lr){3-5}
& & Fossil & DeepICBC & SOS \eqref{eq:sos_ra_exp_s}  \\ \midrule
\ref{exp:safe1} & 2 & $0.72$ & $76$     & $1.9$  \\
\ref{exp:safe2} & 2 & $-$    & $373$   & $1.9$ \\
\ref{exp:safe3} & 3 & $-$    & $28452$ & $24.6$ \\
\ref{exp:safe4} & 4 & $-$    & $7935$  & $7.8$ \\
\ref{exp:safe5} & 6 & $-$    & $763$   & $1.9$ \\
\ref{exp:safe6} & 8 & $-$    & $332$ & $2.2$ \\
\ref{exp:safe_lorenz12} & 12 & $-$   & $-$ & $4.9$ \\
\bottomrule
\end{tabular}}
\label{tab:safe}
\end{table}


As evidenced by Table \ref{tab:safe}, the SOS-based method (Eq. \eqref{eq:sos_ra_exp_s}) verifies all benchmark instances successfully, while Fossil fails to complete verification within the time limit for Examples \ref{exp:safe2}–\ref{exp:safe_lorenz12}. The SOS-based method  also outperforms DeepICBC in computational efficiency, requiring substantially less time. Notably, SOS (Eq. \eqref{eq:sos_ra_exp_s}) is the sole method to verify Example \ref{exp:safe_lorenz12}—a 12-dimensional system—in just 4.9 seconds, underscoring its superior scalability for high-dimensional systems.

\section{Conclusion}
\label{sec:conclusion}

This paper addressed  the computation of CRASs for discrete-time polynomial systems. We first establish an equivalence between CRASs and 0-reach-avoid sets in a probabilistic context by incorporating probabilistic considerations into control inputs. Building upon this formulation, which helps eliminate nonlinear terms arising from the coupling of Lyapunov-like functions and control inputs in existing conditions for computing CRASs, we leverage convex optimization techniques to compute 0-reach-avoid sets to estimate CRASs. Furthermore, inspired by the $\epsilon$-greedy strategy in reinforcement learning, we proposed an iterative approach that updates probability distributions imposed on control inputs and Lyapunov-like functions, to minimize conservatism in the estimating CRASs. Finally, we tested our approach on extensive numerical examples and provided a comprehensive comparison with existing methods. 


\bibliographystyle{splncs04}
\bibliography{reference}

\renewcommand\thesubsection{\Alph{subsection}}

\section*{Appendix}

\subsection{SOS Programming Implementation}
\label{app:sos}

The optimization problems \ref{eq:updata_u_2} and \ref{eq:v_2} can be encoded as SOS programs, as presented in \eqref{sos:u} and \eqref{sos:v_2}, respectively.

\begin{algorithm}
\begin{equation}
\label{sos:u}
\begin{split}
&\min \sum_{i=0,1,\ldots,D} \|\tilde{\bm{u}}_0(\bm{a},\bm{x}_i) - \bm{u}_i \|_2^2 \\
&{\rm s.t. ~}
\begin{cases}
\tilde{\bm{u}}_0(\bm{a},\bm{x})(1) - (\underline{\bm{u}}(1) + \delta) + s_{1,1}(\bm{x})h(\bm{x})-s_{1,2}(\bm{x})g(\bm{x})\in \sum[\bm{x}],\\
(\overline{\bm{u}}(1)-\delta) - \tilde{\bm{u}}_0(\bm{a},\bm{x})(1) +  s_{2,1}(\bm{x})h(\bm{x})-s_{2,2}(\bm{x})g(\bm{x})\in \sum[\bm{x}],\\
\vdots\\
\tilde{\bm{u}}_0(\bm{a},\bm{x})(m) - (\underline{\bm{u}}(m) + \delta) + s_{2m-1,1}(\bm{x})h(\bm{x})-s_{2m-1,2}(\bm{x})g(\bm{x})\in \sum[\bm{x}],\\
(\overline{\bm{u}}(m) -\delta) - \tilde{\bm{u}}_0(\bm{a},\bm{x})(m) +  s_{2m,1}(\bm{x})h(\bm{x})-s_{2m,2}(\bm{x})g(\bm{x})\in \sum[\bm{x}],\\
s_{i,1}, s_{i,2}\in \sum[\bm{x}], \forall i \in 1,\ldots,2m,
\end{cases}
\end{split}
\end{equation}
where $\bm{u}_i$ is calculated by optimization \eqref{eq:max_u_1}.
\end{algorithm}

\begin{algorithm}
\begin{equation}
\label{sos:v_2}
\begin{split}
&\max \bm{c}_v \cdot \hat{\bm{w}}_v\\
&{\rm s.t. }
\begin{cases}
\frac{1-\epsilon}{Z}\int_{\tilde{\bm{u}}_0(\bm{x})-\epsilon}^{\tilde{\bm{u}}_0(\bm{x})+\epsilon} v(\bm{f}(\bm{x},\bm{u})) d\bm{u} - \frac{\epsilon}{Z}\int_{\underline{\bm{u}}}^{\overline{\bm{u}}} v(\bm{f}(\bm{x},\bm{u})) d\bm{u}-\lambda v(\bm{x})\\
\qquad\qquad\qquad\qquad\qquad\quad+s_1(\bm{x})h(\bm{x})-s_2(\bm{x})g(\bm{x})\in \sum[\bm{x}],\\
-v(\bm{x})+s_3(\bm{x})\hat{h}(\bm{x})-s_4(\bm{x})h(\bm{x})\in \sum[\bm{x}],\\
s_1(\bm{x}), s_2(\bm{x}), s_3(\bm{x}),  s_4(\bm{x}) \in \sum[\bm{x}],
\end{cases}
\end{split}
\end{equation}
where $\bm{c}_v \cdot \hat{\bm{w}}_v = \int_\mathcal{X} v(\bm{x}) d\bm{x}$ and $Z = (1-\epsilon)(2\delta)^m + \epsilon \prod_{j=1}^m (\overline{\bm{u}}(j) - \underline{\bm{u}}(j))$.

\end{algorithm}

\subsection{Applications to Safety of Discrete-time Systems}
\label{app:cbf}

For the discrete-time system \eqref{eq:systems}, the safety problem is defined as follows.

\begin{problem}
    Given a set of interest $\mathcal{X} \subseteq \mathbb{R}^n$, an initial set $\mathcal{X}_I \subseteq \mathcal{D}$, and an unsafe set $\mathcal{X}_U \subseteq \mathcal{D}$ with $\mathcal{X}_U \cap \mathcal{X}_I = \emptyset$, the safety verification problem is to verify that the system \eqref{eq:systems} is safe, i.e., starting from any state in $\mathcal{X}_I$, there exists a controller that prevents the following event from occurring:
    
     there exists $t\in \mathbb{N}$ such that the system $\eqref{eq:systems}$ enters the unsafe set $\mathcal{X}_U$ at tine step $N$ while staying within $\mathcal{D}$ before $t$.
\end{problem}

CBCs offer a mathematical framework to solve safety verification problem, as shown in Proposition \ref{prop:cbf}.

\begin{proposition}
\label{prop:cbf}
Given $\lambda \in (0, 1)$, if there exists a function $B(\bm{x}): \mathcal{D}\rightarrow \mathbb{R}$ satisfying 
\begin{equation}
    \label{eq:safe_sup}
    \begin{cases}
       \max_{\bm{u}\in\mathcal{U}}B(\bm{f}(\bm{x},\bm{u}))- \lambda B(\bm{x})\geq 0 & \forall \bm{x}\in \mathcal{D},\\ 
       B(\bm{x})\leq 0 & \forall \bm{x}\in \mathcal{X}_U,\\
       B(\bm{x}) > 0 & \forall \bm{x}\in \mathcal{X}_I,\\
    \end{cases}
\end{equation}
then the system \eqref{eq:systems} is safe.
\end{proposition}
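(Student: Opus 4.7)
The plan is a standard greedy-policy argument combined with an induction showing that the certificate $B$ remains positive along the constructed trajectory as long as it is confined to $\mathcal{D}$. Fix an arbitrary $\bm{x}_0 \in \mathcal{X}_I$. I would construct a control policy $\pi = \{\bm{u}(t)\}_{t\in\mathbb{N}}$ inductively, defining $\bm{u}(t) \in \mathcal{U}$ at each step so that $B(\bm{f}(\bm{x}(t),\bm{u}(t))) \geq \lambda B(\bm{x}(t))$, where $\bm{x}(t) := \bm{\phi}_{\bm{x}_0}^{\pi}(t)$. Under the polynomial/continuous setting of the paper and with $\mathcal{U}$ compact, the maximum in condition~\eqref{eq:safe_sup} is attained, so I can simply take $\bm{u}(t) \in \argmax_{\bm{u}\in\mathcal{U}} B(\bm{f}(\bm{x}(t),\bm{u}))$; if one does not wish to assume continuity, an approximate maximizer suffices since the supremum will turn out to be strictly positive.

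Next, I would establish by induction on $t$ the invariant: if $\bm{x}(s) \in \mathcal{D}$ for all $s \leq t$, then $B(\bm{x}(s)) > 0$ for all $s \leq t$. The base case $t=0$ follows from $\bm{x}_0 \in \mathcal{X}_I$ together with the third inequality of \eqref{eq:safe_sup}. For the inductive step, assume the invariant holds at $t$ and that $\bm{x}(s) \in \mathcal{D}$ for all $s \leq t+1$. Then $\bm{x}(t) \in \mathcal{D}$ and $B(\bm{x}(t)) > 0$, so the first inequality of \eqref{eq:safe_sup} applied at $\bm{x}(t)$, combined with the choice of $\bm{u}(t)$, gives
\begin{equation*}
B(\bm{x}(t+1)) \;=\; B(\bm{f}(\bm{x}(t),\bm{u}(t))) \;\geq\; \lambda B(\bm{x}(t)) \;>\; 0,
\end{equation*}
because $\lambda \in (0,1)$ is positive. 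This closes the induction.

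Finally, I would translate the invariant into the safety claim using the contrapositive of the second inequality in \eqref{eq:safe_sup}: $B(\bm{x}) > 0$ implies $\bm{x} \notin \mathcal{X}_U$. Consequently, if there were some $t \in \mathbb{N}$ with $\bm{x}(t) \in \mathcal{X}_U$ and $\bm{x}(s) \in \mathcal{D}$ for all $s < t$, then (using $\mathcal{X}_U \subseteq \mathcal{D}$) we would have $\bm{x}(s) \in \mathcal{D}$ for all $s \leq t$, so the invariant would force $B(\bm{x}(t)) > 0$, contradicting $\bm{x}(t) \in \mathcal{X}_U$. Hence $\pi$ prevents the forbidden event, and the system is safe from every $\bm{x}_0 \in \mathcal{X}_I$.

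The only real subtlety is the selection of $\bm{u}(t)$: condition~\eqref{eq:safe_sup} uses a $\max$ rather than guaranteeing an explicit control function, so the policy must be built state-by-state along the realized trajectory rather than as a single feedback law. This is minor, however, because the lower bound $\lambda B(\bm{x}(t)) > 0$ ensures the supremum is positive, so a valid $\bm{u}(t)$ always exists; a measurable-selection concern does not arise since we are constructing a single open-loop sequence for a fixed initial state.
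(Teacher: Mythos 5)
Your proof is correct. The paper states Proposition \ref{prop:cbf} without proof (as with Proposition \ref{prop:ra_sup}, it is presented as an adaptation of a known condition), and your greedy-selection-plus-induction argument --- pick $\bm{u}(t)$ attaining (or nearly attaining) the maximum, show inductively that $B$ stays positive along the trajectory as long as it remains in $\mathcal{D}$, then contradict $B\leq 0$ on $\mathcal{X}_U$ --- is exactly the standard way to establish it. The only implicit point worth making explicit is that $B$ must be defined, and the first inequality evaluated, on the one-step image $\bm{f}(\mathcal{D},\mathcal{U})$ rather than only on $\mathcal{D}$ (the same issue the paper handles for $v$ via the set $\widehat{\mathcal{X}}$ in \eqref{eq:x_hat}), which is harmless here since $B$ is in practice a polynomial on all of $\mathbb{R}^n$.
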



Akin to \eqref{eq:ra_exp}, by replacing the maximum operator in \eqref{eq:safe_sup} with the expectation operator, a sufficient condition can also be obtained for solving the safety verification problem.

\begin{theorem}
\label{theo:cbf_prob}
Given $\lambda \in (0, 1)$, if there exists a function $B(\bm{x}): \mathcal{D}\rightarrow \mathbb{R}$ satisfying 
\begin{equation}
    \label{eq:safe_exp}
    \begin{cases}
       E[B(\bm{f}(\bm{x},\bm{u}))]- \lambda B(\bm{x})\geq 0 & \forall \bm{x}\in \mathcal{D},\\ 
       B(\bm{x})\leq 0 & \forall \bm{x}\in \mathcal{X}_U,\\
       B(\bm{x}) > 0 & \forall \bm{x}\in \mathcal{X}_I,\\
    \end{cases}
\end{equation}
then the system \eqref{eq:systems} is safe.
\end{theorem}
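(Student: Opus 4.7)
The plan is to reduce Theorem \ref{theo:cbf_prob} to the already-established Proposition \ref{prop:cbf} by exploiting the elementary fact that the expectation of a bounded measurable function is dominated by its essential supremum. This mirrors the reasoning used earlier in the paper when comparing \eqref{eq:ra_exp} against \eqref{eq:ra_sup} to argue that an expectation-based Lyapunov-like condition is strictly stronger than (i.e., implies) its maximum-based counterpart.

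More concretely, the key observation is that for any probability distribution imposed on $\bm{u} \in \mathcal{U}$ and any fixed $\bm{x} \in \mathcal{D}$, one has
\[
E[B(\bm{f}(\bm{x}, \bm{u}))] \;\leq\; \max_{\bm{u} \in \mathcal{U}} B(\bm{f}(\bm{x}, \bm{u})),
\]
where the maximum on the right is attained since $\bm{f}(\bm{x}, \cdot)$ is polynomial (hence continuous), $\mathcal{U}$ is a compact interval, and $B$ is assumed bounded on $\mathcal{D}$, so that $B \circ \bm{f}(\bm{x}, \cdot)$ is a bounded continuous function on a compact set.

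Using this inequality, the first constraint of \eqref{eq:safe_exp}, namely $E[B(\bm{f}(\bm{x},\bm{u}))] - \lambda B(\bm{x}) \geq 0$ for all $\bm{x} \in \mathcal{D}$, immediately yields $\max_{\bm{u} \in \mathcal{U}} B(\bm{f}(\bm{x},\bm{u})) - \lambda B(\bm{x}) \geq 0$ for all $\bm{x} \in \mathcal{D}$, which is exactly the first constraint of \eqref{eq:safe_sup}. The remaining two conditions, $B(\bm{x}) \leq 0$ on $\mathcal{X}_U$ and $B(\bm{x}) > 0$ on $\mathcal{X}_I$, are identical in \eqref{eq:safe_exp} and \eqref{eq:safe_sup}. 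Therefore any $B$ satisfying \eqref{eq:safe_exp} also satisfies \eqref{eq:safe_sup}, and invoking Proposition \ref{prop:cbf} concludes that the system \eqref{eq:systems} is safe.

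There is no substantive obstacle here; the argument is a direct analogue of Proposition \ref{pro_sim}. The only technicality worth noting is well-definedness of the expectation and attainment of the supremum, which both follow from the standing compactness and regularity assumptions on $\mathcal{U}$, $\bm{f}$, and $B$. If one wanted to make the reduction fully self-contained without appealing to Proposition \ref{prop:cbf}, the proof would instead mimic the contrapositive safety argument for CBCs: sampling a control sequence i.i.d.\ from the imposed distribution produces, with positive probability, a trajectory witnessing safety along the lines of the $\mathcal{R} \subseteq \mathcal{R}_0$ half of Theorem \ref{theo:equal}; but the simple reduction above is considerably shorter and keeps the presentation uniform with the earlier material.
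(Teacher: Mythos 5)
Your proposal is correct and is essentially identical to the paper's own proof: both use the pointwise inequality $E[B(\bm{f}(\bm{x},\bm{u}))]\leq \max_{\bm{u}\in \mathcal{U}}B(\bm{f}(\bm{x},\bm{u}))$ to show that any $B$ satisfying \eqref{eq:safe_exp} also satisfies \eqref{eq:safe_sup}, then invoke Proposition \ref{prop:cbf}. Your additional remarks on attainment of the maximum and well-definedness of the expectation are fine but not needed beyond what the paper states.
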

\begin{proof}
Since $E[B(\bm{f}(\bm{x},\bm{u}))]\leq \max_{\bm{u}\in \mathcal{U}}B(\bm{f}(\bm{x},\bm{u}))$ for $\bm{x}\in \mathcal{D}$, we conclude that if $B(\bm{x}): \mathcal{D}\rightarrow \mathbb{R}$ satisfies \eqref{eq:safe_exp}, it will satisfy \eqref{eq:safe_sup}. According to Proposition \ref{prop:cbf}, we have the conclusion.

\end{proof}

Let $X_I=\{\bm{x}\in \mathbb{R}^n\mid h_I(\bm{x})\leq 0\}$, $\mathcal{D}=\{\bm{x}\in \mathbb{R}^n \mid h(\bm{x})\leq 0\}$, and $\mathcal{X}_U=\{\bm{x}\in \mathbb{R}^n \mid h_U(\bm{x})\leq 0\}$, where $h_I(\bm{x}), h(\bm{x}), h_U(\bm{x})\in \mathbb{R}[\bm{x}]$. The safety verification problem could be addressed via searching for a polynomial function $v(\bm{x})\in \mathbb{R}[\bm{x}]$ satisfying \eqref{eq:safe_exp}. The  search for a polynomial function $v(\bm{x})\in \mathbb{R}[\bm{x}]$ satisfying \eqref{eq:safe_exp} can be encoded as an SOS program \eqref{eq:sos_ra_exp_s}. 
\begin{algorithm}
\begin{equation}
\label{eq:sos_ra_exp_s}
\begin{split}
\begin{cases}
E[v(\bm{f}(\bm{x},\bm{u}))] - \lambda v (\bm{x}) + s_1(\bm{x})h(\bm{x})\in \sum[\bm{x}],\\
-v(\bm{x})+s_2(\bm{x})h_U(\bm{x})\in \sum[\bm{x}],\\
v(\bm{x})+s_3(\bm{x})h_I(\bm{x})\in \sum[\bm{x}],\\
s_1(\bm{x}), s_2(\bm{x}), s_3(\bm{x}) \in \sum[\bm{x}],
\end{cases}
\end{split}
\end{equation}
where $\lambda\in (0,1)$ is a user-specified value. 
\end{algorithm}



\subsection{Experimental Details}
\label{app:exp}
This section shows more experimental details. All computations are run on a machine equipped with an i9-12900H 2.5GHz CPU with 16GB RAM, where the Matlab package YALMIP \cite{lofberg2004yalmip} is employed for SOS decomposition of multivariate polynomials and Mosek 10.1.21 \cite{aps2019mosek} is used to solve the semi-definite programming problem. To ensure numerical stability when solving the semi-definite programming problem, we restrict the coefficients of the unknown polynomials to the interval $[-1000, 1000]$. To calculate the set $\widehat{\mathcal{X}}$ defined in \eqref{eq:x_hat}, we use the semi-definite programming method proposed in \cite{xue2020inner}. In Examples \ref{exp:ra_2d_vanderpol}--\ref{exp:ra_2d_2008}, Examples \ref{exp:ra_2d_2000}--\ref{exp:ra_4d}, Examples \ref{exp:ra_6d}, the number of iterations is set to $K=10$, $K=3$, and $K=2$, respectively. 

In Examples \ref{exp:safe1}--\ref{exp:safe4}, since neither Fossil nor DeepICBC supports control input constraints, all benchmarks in this section exclude such constraints. However, since Alg. \ref{alg} requires a control input set for its computation, we specify a control input set for each example.

Examples \ref{exp:ra_2d_2008}--\ref{exp:safe4} are presented as follows.

\begin{example}
    \label{exp:ra_2d_2008}
    Consider the discretization of the controlled Moore-Greitzer model of a jet engine described in \cite{aylward2008stability},
    \begin{equation*}
        \label{eq:exp2}
        \begin{cases}
        x(t+1)=x(t) + 0.001(-y(t)-1.5x^2(t)-0.5x^3(t)-0.5+u_1(t)),\\
        y(t+1)=y(t) + 0.001(3x(t)-y(t)+u_2(t)),
        \end{cases}
    \end{equation*}
    with the safe set $\mathcal{X}=\{\,(x,y)^{\top}\mid x^2+y^2-1 < 0\,\}$, the target set $\mathcal{T}=\{\,(x,y)^{\top}\mid (x+0.2)^2+(y+0.5)^2-0.05< 0\,\}$, and the control input set $\mathcal{U}=\{\,(u_1,u_2)^\top\mid -1 \leq u_1,u_2 \leq 1\,\}$.
\end{example}

\begin{example}
    \label{exp:ra_2d_2000}
    Consider the discrete-generation predator-prey model \cite{halanay2000stability},
    \begin{equation*}
        \begin{cases}
        x(t+1)= 0.5x(t) - x(t)y(t),\\
        y(t+1)= -0.5y(t) + (u(t) + 1)x(t)y(t),
        \end{cases}
    \end{equation*}
    with the safe set $\mathcal{X}=\{\,(x,y)^{\top}\mid x^2+y^2-1.8^2 < 0\,\}$, the target set $\mathcal{T}=\{\,(x,y)^{\top}\mid x^2+y^2-0.01< 0\,\}$, and the control input set $\mathcal{U}=\{\,u\mid -2 \leq u \leq 2\,\}$.
\end{example}

\begin{example}
    \label{exp:ra_3d_vanderpol} Consider a controlled 3D VanderPol oscillator from \cite{korda2014controller} with a discrete time $0.01$,
    \begin{equation*}
        \begin{cases}
        x(t+1)=x(t) + 0.01(-2y(t)),\\
        y(t+1)=y(t) + 0.01(0.8x(t)-2.1y(t)+z(t)+10x^2(t)y(t)),\\
        z(t+1)=z(t) + 0.01(-z(t)+z^3(t)+0.5u(t)),
        \end{cases}
    \end{equation*}
    with the safe set $\mathcal{X}=\{\,(x,y,z)^{\top}\mid x^2+y^2+z^2-1 < 0\,\}$, the target set $\mathcal{T}=\{\,(x,y,z)^{\top}\mid x^2+y^2+z^2-0.1< 0\,\}$, and the control input set $\mathcal{U} = \{u\mid -5 \leq u \leq 5\}$.
\end{example}

\begin{example}
    \label{exp:ra_4d} Consider a model adapted from \cite{sankaranarayanan2013lyapunov} with a discrete time $0.01$,
    \begin{equation*}
        \begin{cases}
        x_1(t+1)=x_1(t) + 0.01(-x_1(t) + x_2^3(t) - 3x_3(t)x_4(t) + u(t)),\\
        x_2(t+1)=x_2(t) + 0.01(-x_1(t) - x_2^3(t)),\\
        x_3(t+1)=x_3(t) + 0.01(x_1(t)x_4(t)-x_3(t)),\\
        x_4(t+1)=x_4(t) + 0.01(x_1(t)x_3(t)-x_4^3(t)),\\
        \end{cases}
    \end{equation*}
    with the safe set $\mathcal{X}=\{\,(x_1,x_2,x_3,x_4)^{\top}\mid \sum_{i=1}^4 x_i^2-1 < 0\,\}$, the target set $\mathcal{T}=\{\,(x_1,x_2,x_3,x_4)^{\top}\mid \sum_{i=1}^4 x_i^2-0.01 < 0\,\}$, and the control input set $\mathcal{U} = \{u\mid -1 \leq u \leq 1\}$. Since $\int_\mathcal{X} v(\bm{x}) d\bm{x}$ in SOS \eqref{eq:sos_ra_exp} and SOS \eqref{sos:v_2} is computationally expensive in high-dimensional systems, in EX\ref{exp:ra_4d} and EX\ref{exp:ra_6d}, we approximate it by sampling $100$ states $\{\bm{x}_1, \bm{x}_2, \ldots, \bm{x}_{100}\}$ and using $\sum_{i=1}^{100} v(\bm{x}_i)$ as a substitute.
\end{example} 

\begin{example}
    \label{exp:ra_6d} Consider a model adapted from \cite{edwards2024fossil} with a discrete time $0.01$,
    \begin{equation*}
        \begin{cases}
        x_1(t+1)=x_1(t) + 0.01(x_2(t)x_4(t)-x_1^3(t)),\\
        x_2(t+1)=x_2(t) + 0.01(-3x_1(t)x_4(t)-x_2^3(t)),\\
        x_3(t+1)=x_3(t) + 0.01(-x_3(t)-3x_1(t)x_4^3(t)),\\
        x_4(t+1)=x_4(t) + 0.01(-x_4(t)+x_1(t)x_3(t)+u(t)),\\
        x_5(t+1)=x_5(t) + 0.01(-x_5(t)+x_6^3(t)),\\
        x_6(t+1)=x_6(t) + 0.01(-x_5(t)-x_6(t) + x_3^4(t) + u(t)),\\
        \end{cases}
    \end{equation*}
    with the safe set $\mathcal{X}=\{\,(x_1,x_2,\ldots,x_6)^{\top}\mid \sum_{i=1}^6 x_i^2-1 < 0\,\}$, the target set $\mathcal{T}=\{\,(x_1,x_2,\ldots,x_6)^{\top}\mid \sum_{i=1}^6 x_i^2-0.01 < 0\,\}$, and the control input set $\mathcal{U} = \{u\mid -1 \leq u \leq 1\}$.
\end{example} 


\begin{example}
    \label{exp:safe1} This example is taken directly from \cite{edwards2024fossil}, which models a discrete-time system for temperature regulation in a circular building with two rooms, as defined below.
    \begin{equation*}
        \begin{cases}
            &x(t+1) = 1 - \tau (\alpha + \alpha_{e1}) x(t) + \tau \alpha y(t) + \tau \alpha_{e_1} T_e + \tau \alpha_h (T_h - x(t)) u_1(t),\\
            &y(y+1) = 1 - \tau (\alpha + \alpha_{e2}) y(t) + \tau \alpha x(t) +  \tau \alpha_{e_2} T_e + \tau \alpha_h (T_h - y(t)) u_2(t),
        \end{cases}
    \end{equation*}
    with the discretization parameter $\tau = 5$, heat exchange parameters $\alpha = 0.05$, $\alpha_{e_1} = 0.005$, and $\alpha_{e_2} = 0.008$, the external temperature $T_e = 15$, heat exchange parameters between room and heater $\alpha_h = 0.0036$, and the boiler temperature $T_h= 55$. The state space is $\mathcal{D}=\{\,(x,y)^{\top}\mid 17 \leq x,y \leq 30\,\}$, the initial set is $\mathcal{X}_I=\{\,(x,y)^{\top}\mid 17 \leq x, y \leq 18\,\}$, the unsafe set is $\mathcal{X}_U=\{\,(x,y)^{\top}\mid 28 \leq x, y \leq 30\,\}$. When solving Alg. \ref{alg}, we use the control input set $\mathcal{U} = \{(u_1, u_2)\mid -100 \leq u_1, u_2 \leq 100\}$.
\end{example} 

\begin{example}
    \label{exp:safe2} This example is identical to Example 1 in \cite{ren2024formal}, featuring the same initial set, unsafe set, and safe set. In the implementation of Algorithm \ref{alg}, we define the control input set as $\mathcal{U} = \{u \mid -1 \leq u \leq 1\}$.
\end{example} 

\begin{example}
    \label{exp:safe3} This example is identical to Example 4 in \cite{ren2024formal}, featuring the same initial set, unsafe set, and safe set. In the implementation of Algorithm \ref{alg}, we use the control input set $\mathcal{U} = \{u\mid -1 \leq u \leq 1\}$.
\end{example} 

\begin{example}
    \label{exp:safe4} This example is identical to Example 5 in \cite{ren2024formal}, featuring the same initial set, unsafe set, and safe set. In the implementation of Algorithm \ref{alg}, we use the control input set $\mathcal{U} = \{u\mid -100 \leq u \leq 100\}$.
\end{example} 

\begin{example}
    \label{exp:safe5} Consider a 6-dimensional system adapted from \cite{peruffo2021automated},
    \begin{equation*}
        \begin{cases}
        x_i(t+1)=x_i(t) + 0.01x_{i+1}(k), i=1,\ldots,5\\
        x_6(t+1)=x_6(t) - 0.01(576x_1(t)+2400x_2(t)+4180x_3(t)+3980x_4(t)\\
        \qquad\qquad\qquad\qquad\qquad+2273x_5(t)+800x_6(t)-u(t)),
        \end{cases}
    \end{equation*}
    with the state space $\mathcal{D}=\{\,(x_1,\ldots,x_6)^{\top}\mid \sum_{i=1}^6 x_i^2-25 < 0\,\}$, the initial set $\mathcal{X}_I=\{\,(x_1,\ldots,x_6)^{\top}\mid \sum_{i=1}^6 (x_i-1)^2-0.1 < 0\,\}$, and the unsafe set $\mathcal{X}_U=\{\,(x_1,\ldots,x_6)^{\top}\mid \sum_{i=1}^6 (x_i+1.8)^2-0.1 < 0\,\}$. When solving Alg. \ref{alg}, we use the control input set $\mathcal{U} = \{u\mid -1000 \leq u \leq 1000\}$.
\end{example} 

\begin{example}
    \label{exp:safe6} Consider an 8-dimensional system adapted from \cite{peruffo2021automated},
    \begin{equation*}
        \begin{cases}
        x_i(t+1)=x_i(t) + 0.01x_{i+1}(t), i=1,\ldots,7\\
        x_8(t+1)=x_8(t) - 0.01(576x_1(t)+2400x_2(t)+4180x_3(t)+3980x_4(t)\\
        \qquad\qquad\qquad\qquad+2273x_5(t)+800x_6(t)+170x_7(t)+20x_8(t)-u(t)),
        \end{cases}
    \end{equation*}
    with the state space $\mathcal{D}=\{\,(x_1,\ldots,x_8)^{\top}\mid \sum_{i=1}^8 x_i^2-32 < 0\,\}$, the initial set $\mathcal{X}_I=\{\,(x_1,\ldots,x_8)^{\top}\mid \sum_{i=1}^8 (x_i-1)^2-0.1 < 0\,\}$, and the unsafe set $\mathcal{X}_U=\{\,(x_1,\ldots,x_8)^{\top}\mid \sum_{i=1}^8 (x_i+1.8)^2-0.1 < 0\,\}$. When solving Alg. \ref{alg}, we use the control input set $\mathcal{U} = \{u\mid -1000 \leq u \leq 1000\}$.
\end{example} 

\begin{example}
    \label{exp:safe_lorenz12} Consider a 12-dimensional Lorenz model adapted from \cite{lorenz1996predictability},
    \begin{equation*}
        \begin{cases}
        x_i(t+1)=x_i(t) + 0.01((x_{i+1}(t)-x_{i-2}(t))x_{i-1}(t)-x_i(t)), i=1,\ldots,11\\
        x_{12}(t+1)=x_{12}(t) + 0.01((x_{1}(t)-x_{10}(t))x_{11}(t)-x_{12}(t) + u(t)),
        \end{cases}
    \end{equation*}
    where $x_{-1}(t) = x_{11}(t)$ and $x_{0}(t) = x_{12}(t)$. The state space is  $\mathcal{D}=\{\,(x_1,\ldots,x_{12})^{\top}\mid \sum_{i=1}^{12} x_i^2-16 < 0\,\}$, and the initial set is $\mathcal{X}_I=\{\,(x_1,\ldots,x_{12})^{\top}\mid \sum_{i=1}^{12} (x_i-1)^2-0.1 < 0\,\}$. The unsafe set $\mathcal{X}_U=\{\,(x_1,\ldots,x_8)^{\top}\mid 12 < \sum_{i=1}^8 x_i^2 < 16\,\}$. When solving Alg. \ref{alg}, we use the control input set $\mathcal{U} = \{u\mid -1 \leq u \leq 1\}$.
\end{example} 

\end{document}